\newcommand{\R}{{\mathbb R}}
\newcommand{\Au}{{\bf (A)}}
\newcommand{\Ad}{{\bf (C)}}
\newcommand{\Bu}{{\bf (B)}}
\newcommand{\Bd}{{\bf (C)}}
\newcommand{\somr}{\sum_{r=1}^{k_n}}
\newcommand{\xnunet}{X_{n,1}^{\star}}
\newcommand{\xnret}{X_{n,r}^{\star}}
\newcommand{\ynr}{Y_{n,r}}
\newcommand{\tynr}{\tilde{Y}_{n,r}}
\newcommand{\inr}{I_{n,r}}
\newcommand{\Fnr}{F_{n,r}}
\newcommand{\Dnr}{D_{n,r}}
\newcommand{\Mnr}{M_{n,r}}
\newcommand{\mnr}{m_{n,r}}
\newcommand{\lnr}{\lambda_{n,r}}
\newcommand{\fnchap}{\hat f_n}
\newcommand{\gnchap}{\hat f_n^{G}}
\newcommand{\fnchapx}{\hat f_n(x)}
\newcommand{\fntildx}{\tilde f_n(x)}
\newcommand{\fntild}{\tilde f_n}
\newcommand{\fncheck}{\check f_n}
\newcommand{\fncheckx}{\check f_n(x)}
\def\E{\mathop{\hbox{E}}\nolimits}            
\def\Abs#1{\left\vert #1 \right\vert}   
\def\Var{\mathop{\hbox{Var}}\nolimits}          
\newcommand{\proba}[1]
{P\left(\left\{{#1}\right\}\right)}
\newcommand{\normsup}[1]
{{\left\|{#1}\right\|}_{\infty}} 
\newcommand{\normC}[1]
{{\left\|{#1}\right\|}_{\infty}^C} 
\newcommand{\normun}[1]
{{\parallel{#1}\parallel}_1} 
\newcommand{\normdeux}[1]
{{\parallel{#1}\parallel}_2} 
\newcommand{\grando}[1]
{O\left({#1}\right)} 
\newcommand{\petito}[1]
{o\left({#1}\right)} 
\newcommand{\indic}[1] 
{{\bf 1}_{#1}}
\newtheorem{Theo}{Theorem}
\newtheorem{prop}{Proposition}
\newtheorem{Coro}{Corollary}
\newtheorem{Lem}{Lemma}
\begin{document}

\title{Extreme values and kernel estimates\\ of point processes boundaries}
\author{St\'ephane Girard \& Pierre Jacob\\\\
Laboratoire de Probabilit\'es et Statistique, Universit\'e Montpellier 2,\\
place Eug\`ene Bataillon, 34095 Montpellier cedex 5, France.\\
{\tt \{girard, jacob\}@stat.math.univ-montp2.fr}}

\date{}

\maketitle
\begin{abstract}
We present a method for estimating the edge of a two-dimensional
bounded set, given a finite random set of points drawn from the interior.
The estimator is based both on a Parzen-Rosenblatt kernel and 
extreme values of point processes.  We give conditions
for various kinds of convergence and asymptotic normality.
We propose a method of reducing the negative bias and edge effects, illustrated by a simulation.
\\\\
{\bf Keywords:} Kernel estimates, Extreme values, Poisson process, Shape estimation.\\
\\
{\bf AMS Subject Classification:} Primary 60G70; Secondary 62M30, 62G05, 62G20.
\end{abstract}

\section{Introduction}

We address the problem of estimating a bounded set $S$ of $\R^2$ given
a finite random set $N$ of points drawn from the interior. This kind of problem
arises in various frameworks such as
classification~\cite{HarRas}, image processing~\cite{KorTsy3} or econometrics
problems~\cite{DepSimTul}. A lot of different solutions were proposed
since~\cite{Geff1} and~\cite{RenSul} depending on the properties of
the observed random set $N$ and of the unknown set $S$.
In this paper, we focus on the special case 
where $S=\{(x,y)\in \R^2\mid 0\leq x\leq 1~;\;0\leq y \leq f(x)\}$,
with $f$ an unknown function. Thus, the estimation of the subset $S$
reduces to the estimation of the function $f$. 
This problem arises for instance in econometrics where the function $f$ is
called the production frontier.
This is the case in~\cite{Produc}
where data consist of pairs $(X_i,Y_i)$,
$X_i$ representing the input (labor, energy or capital) used to produce
an output $Y_i$ in a given firm $i$. In such a framework, the value $f(x)$
 can be interpreted as the maximum level of output which is attainable 
for the level of input $x$.

Most papers on support estimation use to consider the random set of point $N$
appearing under the frontier $f$ as a $n$-sample. However, in practice, the 
{\em number} as well as the position of the points is random, so we do prefer for 
a long time to deal with point processes. Cox  processes are known to 
provide a high level of generality among the point processes on a plane.
However, after conditioning the intensity, the realization of a Cox 
process is merely the one of a Poisson point process,  so what is really 
observed is a Poisson point process. Moreover in most applications such as 
medical imaging
$f$ delimits a frontier between two zones. A contrasting substance is 
spread on the whole domain, for instance the brain. The magnetic resonance 
imaging only displays the bleeding.
So the healthy part acts as a mask. Inversely, but similarly, when 
investigating the retina, the patient does not detect the small luminous 
spots pointed on a destroyed area. In such cases, there is no way to 
consider the remaining observed points as a random sample. In fact, 
such  truncated empirical point processes are no longer random samples but 
binomial point processes (see \cite{Reiss}).
In fact, even the nature is unable to obtain a random sample on $S$ in this way!
It turns out that binomial point processes are well approximated by Poisson 
processes. Moreover, truncated Poisson point processes are still Poisson 
point processes and the same is true for general Cox processes.
Naturally, as our point of view is not prevailing, we have to preserve the 
possibility of comparing our results with those of authors dealing with 
random samples. So in place of a uniform $n$-empirical process on $S$ with the 
distribution $\lambda/\lambda(S)$, we consider a Poisson point process 
with the intensity $n\lambda/\lambda(S)$,
where $\lambda$ denotes the Lebesgue measure.  
The intensities of the 
two processes are obviously equal. Finally, we claim that we are able to deduce 
for samples similar results  by means of Poisson 
approximations. But it not so simple to achieve, and we prefer to defer 
this work to a further paper.

In the wide range of nonparametric functional estimators~\cite{Bosq}, piecewise
polynomials have been especially studied~\cite{KorTsy,KorTsy3}
and their asymptotic optimality is established under different 
regularity assumptions on $f$.
See~\cite{Har, Jac1,MamTsy} for other cases.
Estimators of $f$ based upon orthogonal series appear in~\cite{AbbSuq,JacSuq}.
In the case of Haar and $C^1$ bases, extreme values estimates are defined 
and studied in~\cite{GirJac, GirJac2, L1Haar} and reveal better properties
than those of~\cite{JacSuq}.
In the same spirit, a Faber-Shauder estimate is proposed in~\cite{Gardes}.
Estimating $f$ can also been considered as a regression problem 
$Y_i=f(X_i)+\varepsilon_i$ with negative noise $\varepsilon_i$.   In this context,
local polynomial estimates are introduced, see~\cite{KK}, or~\cite{Hall} for
a similar approach.

Here a kernel method is proposed in order to obtain smooth estimates 
$\fnchap$.   From the practical point of view, these estimates enjoy
explicit forms and are thus easily implementable.  
From the theoretical point of view, 
we give limit laws with explicit speed
of convergence $\sigma_n$ for 
$\sigma_n^{-1}(\fnchap-\E\fnchap)$ and even for $\sigma_n^{-1}(\fnchap-f)$
after reducing the bias. The rate of convergence of the $L_1$ norm
is proved to be $O(n^{-\frac{\alpha}{5/4+\alpha}})$ for a $\alpha$-Lispchitzian frontier $f$,
which is slightly suboptimal compared to the minimax rate $n^{-\frac{\alpha}{1+\alpha}}$.  
Section~\ref{defs}
is devoted to the definition of the estimator and basic properties of extreme values.
Section~\ref{sectionbias} deals with {\it ad hoc} adaptation of Bochner approximation
results. In Section~\ref{sectionestim}, we give the main results of convergence:
mean square uniform convergence and almost complete uniform convergence. We prove,
in Section~\ref{sectionasymp}, the asymptotic normality of the estimator, when
centered to its mathematical expectation. Section~\ref{sectionreduc} is devoted to
some bias reductions, allowing in certain cases asymptotic normality for an 
estimator, when centered to the function $f$. We also present a technique for avoiding
edge effects. In~\cite{Nous}, a simulation gives an idea of the improvements
carried off by these modifications.
Section~\ref{seccomp} is dedicated to 
comparison of kernel estimates with the other propositions found
in the literature.

\section{Definition and basic properties}
\label{defs}

For all $n>0$, let $N$ be a Poisson point process with mean measure 
$nc\lambda$,
where $\lambda$ denotes
the Lebesgue measure on a subset $S$ of $\R^2$ defined as follows:
\begin{equation}
\label{defS}
 S=\{(x,y)\in {\R}^2\mid 0\leq x\leq 1~;\;0\leq y \leq f(x)\}.
\end{equation}
The normalization parameter $c$ is defined by $c=1/\lambda(S)$ such that
$\E(N(S))=n$.  
We assume that on $[0,1]$, $f$ is a bounded measurable function, strictly 
positive and $\alpha$-Lipschitz, $0<\alpha\leq 1$, with Lipschitz 
multiplicative constant $L_f$ and that $f$ vanishes elsewhere. 
We denote by $m$ (and $M$) the lower (and the upper)
bound of $f$ on $[0,1]$.
Given $(h_n)$ a sequence of positive real numbers such that $h_n\to 0$
when $n\to\infty$, the function $f$ is approximated by the convolution:
\begin{equation}
\label{gnx}
g_n(x)  =\int_{\R} K_n(x-y) f(y)dy,\;\; x\in[0,1],
\end{equation}
where $K_n$ is given by
$$
K_n(t)=\frac{1}{h_n}K\left(\frac{t}{h_n}\right), \;\; t\in\R,
$$
and $K$ is a bounded positive Parzen-Rosenblatt kernel {\it i.e.} verifying:
$$
\forall x\in\R, \; 0\leq K(x)\leq \sup_{\R}K<+\infty,\quad
\int_{\R} K(t)dt=1, \quad 
\lim_{\Abs{x}\to\infty} x K(x) = 0.
$$
Note that $K^2$ and $K^3$ are Lebesgue-integrable.
In the sequel, we introduce extra hypothesis on $K$ when necessary.
Consider $(k_n)$ a sequence of integers increasing to infinity 
and divide $S$ into $k_n$ cells $D_{n,r}$ with:
$$
\Dnr = \{\;(x,y) \in S \mid x\in \inr\;\}, \qquad
\inr = \left[\frac{r-1}{k_n}, \frac{r}{k_n}\right[,\quad r=1,\ldots,k_n.
$$
The convolution~(\ref{gnx}) is discretized on the $\{\inr\}$ subdivision
of $[0,1]$:
$$
f_n(x)  =  \frac{1}{k_n}\somr K_n(x-x_r) f(x_r), \;\; x\in[0,1],
$$
where $x_r$ is the center of $\inr$.
The values $f(x_r)$ of the function on the subdivision are estimated
through $\xnret$ the supremum of the second coordinate of
 the points of the truncated process $N(.\cap \Dnr)$.
The considered estimator can be written as:
\begin{equation}
\label{estimate}
\fnchapx  =\frac{1}{k_n}\somr K_n(x-x_r) \xnret, \;\; x\in[0,1].
\end{equation}
Formally, this estimator is very similar to the estimators based on
expansion of $f$ on $L^2$ bases~\cite{GirJac,GirJac2} although it
is obtained by a different principle.
Besides, combining the uniform kernel $K(t)=\indic{[-1/2,1/2]}(t)$
with the bandwidth $h_n=1/k_n$ yields Geffroy's estimate:
\begin{equation}
\label{estiGeff}
\gnchap(x)  =\somr \indic{\inr}(x) \xnret, \;\; x\in[0,1],
\end{equation}
which is piecewise constant on the $\{\inr\}$ subdivision of $[0,1]$.  
At the opposite, here we focus on smooth estimators
obtained by considering smooth kernels in~(\ref{estimate}).
More precisely, we examine systematically the convergence properties of the estimator in
two main situations:
\begin{description}
\item [\Au] $K$ is $\beta$-Lipschitz on $\R$, $0<\beta\leq 1$, 
with Lipschitz multiplicative constant $L_K$, 
$x\rightarrow x^2K(x)$ is integrable, $k_n=\petito{n}$,
$h_n k_n^{\alpha}\to\infty$, and $h_n^{1+\beta} k_n^{\beta}\to\infty$ when $n\to \infty$.
\item [\Bu] $K$ has a compact support, a bounded first derivative and
is piecewise $C^2$,
$k_n=\petito{n}$ and $h_n k_n\to\infty$ when $n\to \infty$.
\end{description}
Of course, Geffroy's estimate does not fulfil these conditions.  
Some stochastic convergences will require extra conditions on the $(k_n)$ sequence:
\begin{description}
\item [\Ad] $k_n=\petito{{n}/{\ln n}}$ and $n=\petito{k_n^{1+\alpha}}$.
\end{description}
Throughout this paper, we write:
$$
\lambda (D_{n,r})=\lnr,\; \min_{x\in\inr} f(x) = \mnr,\; \max_{x\in\inr} f(x) = \Mnr.
$$
The cumulative distribution function of $\xnret$ is easily calculated
on $[0,\mnr]$, after noticing that, for every measurable $B\subset S$,
$P(N(B)=0)=\exp{(-nc\lambda(B))}$:
\begin{equation}
\label{eqfdr}
\Fnr(x)=P(\xnret\leq x)=\exp{\left(\frac{nc}{k_n}(x-k_n\lnr)\right)}, \;\; x\in [0,\mnr].
\end{equation}
Of course, $\Fnr(x)=0$ if $x<0$ and $\Fnr(x)=1$ if $x>\Mnr$. For $x\in [\mnr,\Mnr]$,
$\Fnr(x)$ is unknown, but $1-\Fnr(\mnr)$ can be controlled through regularity conditions
made on $f$. Finally,~(\ref{eqfdr}) and this control provide precise expansions for the
first moments of $\xnret$. We quote that useful results in the following lemma.

\begin{Lem}
 \label{lemespe} 
Assume \Ad~is verified. Then,
\begin{description}
\item [(i)] $\displaystyle \max_r \Abs{\E(\xnret)- k_n\lnr +\frac{k_n}{nc}} = \grando{\frac{n}{k_n^{1+2\alpha}}}$,
\item [(ii)] $\displaystyle \max_r \Abs{\Var(\xnret)- \frac{k_n^2}{n^2 c^2}} = \grando{\frac{1}{k_n^{2\alpha}}}$,
\item [(iii)] $\displaystyle \max_r \E\left(\Abs{\xnret-E(\xnret)}^3\right)= \grando{\frac{k_n^3}{n^3}}$.
\end{description}
\end{Lem}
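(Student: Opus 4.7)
My starting point is the identity $\E((\xnret)^k) = k\int_0^{\Mnr} x^{k-1}(1-\Fnr(x))\,dx$ for $k=1,2$, and the analogous representation of $\E(|\xnret - k_n\lnr|^3)$. I split each integral at $\mnr$ and exploit the explicit form~(\ref{eqfdr}) on $[0,\mnr]$, bounding the unknown part on $[\mnr,\Mnr]$ by its length $\leq L_f/k_n^\alpha$ times the tail $1-\Fnr(\mnr)=1-e^{-aA}\leq aA$. Here $a:=nc/k_n$ and $A:=k_n\lnr-\mnr\in[0,L_f/k_n^\alpha]$; under~\Ad, $aA=\grando{n/k_n^{1+\alpha}}=\petito{1}$ uniformly in $r$, so every Taylor expansion of $e^{-aA}$ is legitimate, and $e^{-ak_n\lnr}\leq e^{-ncm/k_n}$ is super-polynomially small since $n/k_n\gg\ln n$, so it is absorbed into all error terms.

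For~(i), the substitution $y=k_n\lnr-x$ on $[0,\mnr]$ evaluates the principal piece as $k_n\lnr-1/a+D$, with $D:=-(1/a)(e^{-aA}-1+aA)=\grando{aA^2}=\grando{n/k_n^{1+2\alpha}}$; the boundary remainder $I_1:=\int_{\mnr}^{\Mnr}(1-\Fnr)\,dx$ is $\grando{(L_f/k_n^\alpha)aA}=\grando{n/k_n^{1+2\alpha}}$, of the same order, which yields~(i).

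For~(ii), running the same scheme on $2\int x(1-\Fnr)\,dx$ writes $\E((\xnret)^2)=\mu^2+1/a^2+B+I_2$ with $\mu:=k_n\lnr-1/a$, where $B$ is the Taylor contribution on $[0,\mnr]$ and $I_2$ is the analogue of $I_1$. Squaring $\E(\xnret)=\mu+D+I_1$ and subtracting, the essential algebraic identity is
\[
B-2\mu D=\frac{2A}{a}(e^{-aA}-1+aA)-A^2=-A^2+\grando{aA^3},
\]
and $aA^3=\grando{n/k_n^{1+3\alpha}}=\petito{1/k_n^{2\alpha}}$ under~\Ad. The quadratic leftovers $D^2,I_1^2,DI_1$ are all $\petito{1/k_n^{2\alpha}}$, and the mixed boundary term rewrites as $I_2-2\mu I_1=2\int_{\mnr}^{\Mnr}(x-\mu)(1-\Fnr)\,dx$, gaining a factor $|x-\mu|=\grando{k_n/n}$ (since $|x-k_n\lnr|=\grando{1/k_n^\alpha}$ is dominated by $1/a=\grando{k_n/n}$ under~\Ad), producing $\grando{(k_n/n)\cdot(n/k_n^{1+2\alpha})}=\grando{1/k_n^{2\alpha}}$. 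Collecting, $\Var(\xnret)=1/a^2-A^2+\grando{1/k_n^{2\alpha}}$, which is~(ii). This hidden cancellation is the main obstacle: naive bounds on $B,I_1,I_2$ individually yield only $\grando{n/k_n^{1+2\alpha}}$, strictly weaker than the target since $n/k_n\to\infty$, so one must track exact coefficients of the $aA^2$ contributions in both $\E((\xnret)^2)$ and $(\E(\xnret))^2$ and handle the boundary pieces as the specific combination $I_2-2\mu I_1$.

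For~(iii), the $c_r$-inequality $|\xnret-\E(\xnret)|^3\leq4|\xnret-k_n\lnr|^3+4|k_n\lnr-\E(\xnret)|^3$ together with~(i) reduces the claim to $\E(|\xnret-k_n\lnr|^3)=\grando{k_n^3/n^3}$. I split this integral at $\mnr$: on $[0,\mnr]$, the substitution $y=k_n\lnr-x$ identifies the piece with $\int_A^{k_n\lnr}y^3\,a\,e^{-ay}\,dy\leq6/a^3=\grando{k_n^3/n^3}$, namely the third moment of $\mathrm{Exp}(a)$; on $[\mnr,\Mnr]$, $|x-k_n\lnr|^3\leq(L_f/k_n^\alpha)^3$ and $P(\xnret>\mnr)\leq aA$ give a remainder $\grando{n/k_n^{1+4\alpha}}$, which is $\petito{k_n^3/n^3}$ under~\Ad\ since $n^4=\petito{k_n^{4+4\alpha}}$.
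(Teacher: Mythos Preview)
Your argument is correct. The tail-integral representation, the split at $\mnr$, the exponential bounds, and in particular the algebraic identity $B-2\mu D=(2A/a)(e^{-aA}-1)+A^2$ all check out; your treatment of the boundary combination $I_2-2\mu I_1$ and of the quadratic remnants $D^2,\,I_1^2,\,DI_1$ under~\Ad\ is also right.

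The route, however, differs from the paper's in one respect that is worth noting. For~(ii) the paper does not compute $\E(\xnret)$ and $\E((\xnret)^2)$ separately; it first \emph{centers at $k_n\lnr$} and writes
\[
\Var(\xnret)=\E\bigl[(\xnret-k_n\lnr)^2\bigr]-\bigl[\E(\xnret-k_n\lnr)\bigr]^2.
\]
With this centering the second moment $\E[(\xnret-k_n\lnr)^2]$ is computed exactly as in~(i) (same substitution, same three pieces) and equals $2k_n^2/(n^2c^2)+\grando{1/k_n^{2\alpha}}$ directly, because on the explicit part the integrand is $u^2e^{-u}$ rather than $ue^{-u}$, and on the unknown part the factor $(x-k_n\lnr)^2\le L_f^2/k_n^{2\alpha}$ already carries the correct size. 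Squaring the expression from~(i) then gives $k_n^2/(n^2c^2)+\grando{1/k_n^{2\alpha}}$, since the cross term is $(k_n/n)\cdot\grando{n/k_n^{1+2\alpha}}=\grando{1/k_n^{2\alpha}}$. No hidden cancellation appears; the ``main obstacle'' you identify is an artifact of not centering first. Your decomposition works, but at the cost of tracking the exact $2\mu D$ contribution inside $B$ and of handling the mixed boundary term as $I_2-2\mu I_1$. The paper's centering trick buys a shorter and more transparent argument, and it extends to~(iii) in the same way (compute $\E|\xnret-k_n\lnr|^3$ directly); your $c_r$-inequality detour is a valid alternative.
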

We shall also need a lemma on the Parzen-Rosenblatt kernel. 
\begin{Lem}
\label{lemK}
Let $a\neq 0$. For any probability sequence $(P_n)$, we have 
\begin{equation}
\label{eqlemK}
\int K(u)K\left(u+\frac{a}{h_n}\right) P_n(du) =  \petito{h_n}.
\end{equation}
\end{Lem}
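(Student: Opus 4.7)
The plan is to exploit the Parzen-Rosenblatt hypothesis $\lim_{|x|\to\infty} x K(x) = 0$, which guarantees that for every $\varepsilon > 0$ there exists $A > 0$ with $|x| \geq A \Rightarrow K(x) \leq \varepsilon/|x|$. The key observation is that, in the integrand, the two arguments $u$ and $u + a/h_n$ are separated by a distance $|a|/h_n \to \infty$, so at least one of them must be large.

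Concretely, I would proceed as follows. Assume without loss of generality $a > 0$ (the case $a<0$ is symmetric), and let $n$ be large enough that $a/(2h_n) \geq A$. I split the real line into $\{u \geq -a/(2h_n)\}$ and $\{u < -a/(2h_n)\}$. On the first region, $u + a/h_n \geq a/(2h_n) \geq A$, hence
\[
K\!\left(u + \frac{a}{h_n}\right) \leq \frac{\varepsilon}{u + a/h_n} \leq \frac{2\varepsilon h_n}{a},
\]
while $K(u) \leq \sup K$. On the second region, $|u| \geq a/(2h_n) \geq A$, hence by the same argument $K(u) \leq 2\varepsilon h_n/a$, while $K(u+a/h_n) \leq \sup K$. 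In either case the product satisfies
\[
K(u)\, K\!\left(u+\frac{a}{h_n}\right) \leq \frac{2 \varepsilon (\sup K)}{|a|}\, h_n.
\]

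Integrating this uniform bound against the probability measure $P_n$ (which integrates to $1$ independently of $n$) yields
\[
\int K(u)\, K\!\left(u+\frac{a}{h_n}\right) P_n(du) \leq \frac{2 (\sup K)}{|a|}\, \varepsilon\, h_n
\]
for all sufficiently large $n$. Since $\varepsilon > 0$ is arbitrary, the left-hand side is $\petito{h_n}$.

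There is no real obstacle here: the entire argument rests on the decay condition $xK(x) \to 0$, which provides the right $1/h_n$ decay when the kernel is evaluated at a point of order $1/h_n$. The only point requiring a little care is that the bound must be uniform in the underlying measure, since $(P_n)$ is arbitrary; this is automatic because the estimate on the integrand is pointwise in $u$ and does not involve $P_n$.
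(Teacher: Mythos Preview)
Your proof is correct and follows essentially the same approach as the paper: split the domain so that on each piece one of the two kernel arguments has modulus at least $|a|/(2h_n)$, apply the tail condition $xK(x)\to 0$ to bound that factor by $2\varepsilon h_n/|a|$, bound the other factor by $\sup K$, and integrate against the probability measure $P_n$. The only cosmetic difference is that the paper splits according to $|u|\lessgtr |a|/(2h_n)$ (handling the second piece via the substitution $v=u+a/h_n$), whereas you split directly according to the sign of $u+a/(2h_n)$; the resulting bounds are identical.
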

Proof of all lemmas are postponed to the Appendix.  
\begin{Coro}
\label{coroK}
For all $x\neq y$,
$$
\frac{1}{h_nk_n}\somr K\left(\frac{x-x_r}{h_n}\right) K\left(\frac{y-x_r}{h_n}\right) = \petito{1}.
$$
\end{Coro}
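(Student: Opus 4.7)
The plan is to recognize the sum as an integral against a discrete probability measure, then reduce the corollary to a direct application of Lemma \ref{lemK} via a change of variables.

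First I would introduce the empirical probability measure $P_n$ on $[0,1]$ that assigns mass $1/k_n$ to each point $x_r$, $r=1,\ldots,k_n$. Then the expression to bound can be rewritten as
\[
\frac{1}{h_n k_n} \somr K\!\left(\frac{x-x_r}{h_n}\right) K\!\left(\frac{y-x_r}{h_n}\right) = \frac{1}{h_n} \int K\!\left(\frac{x-t}{h_n}\right) K\!\left(\frac{y-t}{h_n}\right) dP_n(t).
\]

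Next I would perform the change of variable $u = (x-t)/h_n$. Because this is an affine bijection, the push-forward $Q_n$ of $P_n$ under $t \mapsto (x-t)/h_n$ is a probability measure on $\R$, so $(Q_n)$ is a probability sequence in the sense required by Lemma \ref{lemK}. Under this transformation $(y-t)/h_n = u + a/h_n$ with $a = y-x$, and the hypothesis $x\neq y$ guarantees $a\neq 0$. Hence
\[
\frac{1}{h_n} \int K\!\left(\frac{x-t}{h_n}\right) K\!\left(\frac{y-t}{h_n}\right) dP_n(t) = \frac{1}{h_n}\int K(u)\,K\!\left(u+\frac{a}{h_n}\right) dQ_n(u).
\]

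Finally I would apply Lemma \ref{lemK} with $a = y-x$ and the probability sequence $(Q_n)$: the integral on the right is $\petito{h_n}$, so dividing by $h_n$ yields $\petito{1}$, as claimed. There is no real obstacle here, since the corollary amounts essentially to rewriting the Riemann-type sum as an integral against a probability and invoking the lemma; the only point requiring attention is verifying that the measure obtained after the affine change of variable is indeed a probability measure on $\R$, which is immediate.
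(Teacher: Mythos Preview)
Your proof is correct and follows essentially the same approach as the paper: the paper simply skips the intermediate step and directly takes $P_n=\frac{1}{k_n}\somr \delta_{(x-x_r)/h_n}$, which is exactly your push-forward measure $Q_n$, and then invokes Lemma~\ref{lemK} with $a=y-x$.
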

This result is deduced from Lemma~\ref{lemK} with $a=y-x$ and
$\displaystyle P_n=\frac{1}{k_n}\somr \delta_{\frac{x-x_r}{h_n}}$.

\section{Bias convergence}
\label{sectionbias}

We first give conditions on the sequences $(h_n)$ and $(k_n)$ to
obtain the local uniform convergence of $f_n$ to $f$, that is,
the uniform convergence on every compact subset $C$ of $]0,1[$.
Of course, since $f$ is not continuous at 0 and 1, we cannot
obtain uniform convergence on the whole compact $[0,1]$.  
We note in the sequel:
$$
\normC{g}=\sup_{x\in C} \Abs{g(x)},
$$
for all function $g:[0,1]\to\R$.
The triangular inequality
$$
\normC{f-f_n} \leq \normsup{f_n-g_n} + \normC{g_n-f},
$$
shows the two contributions to the bias. 
The first term, studied in Lemma~\ref{lembiais1}, is a consequence
of the discretization of~(\ref{gnx}).
The second term is studied in Lemma~\ref{lembiais2}. It appears 
in various other kernel estimates such as regression 
or density estimates.

\begin{Lem}
\hfill
\label{lembiais1}
\begin{description}
\item [(i)] Under \Au,
$\displaystyle \normsup{f_n-g_n}=\grando{\frac{1}{h_n k_n^\alpha}}+\grando{\frac{1}{h_n^{1+\beta} k_n^\beta}}$.
\item [(ii)] Under \Bu,
$\displaystyle \normsup{f_n-g_n}=\grando{\frac{1}{h_n^2 k_n^2}}+
\grando{\frac{1}{k_n^\alpha}}$.
\end{description}
\end{Lem}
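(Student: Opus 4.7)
Since $|I_{n,r}| = 1/k_n$ and $f$ vanishes outside $[0,1]$, I would write
$$g_n(x) - f_n(x) = \somr \int_{\inr} \bigl[K_n(x-y) f(y) - K_n(x-x_r) f(x_r)\bigr]\, dy,$$
and split the integrand algebraically as
$$K_n(x-y) f(y) - K_n(x-x_r) f(x_r) = K_n(x-y)[f(y) - f(x_r)] + [K_n(x-y) - K_n(x-x_r)]\, f(x_r).$$
For $y \in \inr$ one has $\Abs{y - x_r} \leq 1/(2k_n)$, so the $\alpha$-Lipschitz hypothesis on $f$ yields $\Abs{f(y) - f(x_r)} \leq L_f/(2k_n)^\alpha$. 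This single decomposition drives both cases.

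Under \Au, I would bound the first summand crudely by $K_n \leq \normsup{K}/h_n$: integrating over $\inr$ (length $1/k_n$) and summing the $k_n$ contributions produces an $\grando{1/(h_n k_n^\alpha)}$ term. The second summand uses the $\beta$-Lipschitz bound $\Abs{K_n(x-y) - K_n(x-x_r)} \leq L_K \Abs{y - x_r}^\beta / h_n^{1+\beta} \leq L_K/(h_n^{1+\beta}(2k_n)^\beta)$; multiplying by $\Abs{f(x_r)} \leq M$ and performing the analogous integration and summation yields $\grando{1/(h_n^{1+\beta} k_n^\beta)}$, proving (i).

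Under \Bu the first summand admits a tighter estimate via $\sum_r \int_{\inr} K_n(x-y)\, dy \leq \int_\R K_n = 1$, giving a total contribution of $\grando{1/k_n^\alpha}$. For the second summand, since $f(x_r)$ is constant on $\inr$, the inner integral equals $f(x_r)\bigl[\int_{\inr} K_n(x-y)\, dy - (1/k_n) K_n(x-x_r)\bigr]$, i.e.\ $f(x_r)$ times the midpoint-rule error for $y \mapsto K_n(x-y)$. On any $\inr$ on which this map is $C^2$, the classical midpoint error is bounded by $(1/k_n)^3 \normsup{K_n''}/24 = \normsup{K''}/(24\, k_n^3 h_n^3)$. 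The compact support of $K$ guarantees that only $\grando{h_n k_n}$ indices $r$ contribute, so the corresponding total is $\grando{1/(h_n^2 k_n^2)}$. The $\grando{1}$ exceptional intervals $\inr$ containing a point of non-$C^2$-ness of $y \mapsto K((x - y)/h_n)$ are handled via the bounded first derivative of $K$: $\Abs{K_n(x-y) - K_n(x-x_r)} \leq \normsup{K'}/(2 h_n^2 k_n)$ integrates and sums to a further $\grando{1/(h_n^2 k_n^2)}$ contribution, which establishes (ii).

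The main delicate step is the midpoint argument under \Bu: the $1/h_n^2$ saving from $C^2$ smoothness must be extracted, the compact support of $K$ used to cap the number of contributing intervals at $\grando{h_n k_n}$, and the bounded number of exceptional intervals dealt with via the weaker $C^1$ control. Everything else is routine bookkeeping with the Lipschitz hypotheses.
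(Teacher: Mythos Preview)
Your proof is correct and follows essentially the same route as the paper: the identical algebraic decomposition into an $f$-Lipschitz term and a $K$-difference term, the same crude bounds under \Au, and under \Bu\ the same midpoint/Taylor argument together with compact-support counting of the $\grando{h_nk_n}$ active indices and separate $C^1$ treatment of the finitely many exceptional intervals. The only cosmetic difference is that for the first summand under \Bu\ you invoke $\int_\R K_n=1$ directly, whereas the paper reaches the same $\grando{1/k_n^\alpha}$ by counting nonzero terms; either works.
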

The function $f$ is uniformly continuous on $[0,1]$ as soon as it is continuous on the same compact interval
and the Bochner Lemma entails that $\normC{g_n-f}\to 0$ as $n\to\infty$.
The following lemma precises this result by providing the rates of the convergence of $\normC{g_n-f}$ in different situations. 
\begin{Lem} 
\label{lembiais2}
\hfill
\begin{description}
\item [(i)] If $x\to x^2 K(x)$ is integrable, then
$\displaystyle \normC{g_n-f}=\grando{h_n^{\frac{2\alpha}{\alpha+2}}}$.
\item [(ii)] If $K$ has a compact support then
$\displaystyle \normC{g_n-f}=\grando{h_n^\alpha}.  $
\end{description}
\end{Lem}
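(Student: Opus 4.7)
The plan is to perform a standard Bochner-type decomposition. Since $\int_{\R} K(t)\,dt = 1$, a change of variables $t = (x-y)/h_n$ gives
$$g_n(x) - f(x) = \int_{\R} K(t)\bigl(f(x - h_n t) - f(x)\bigr)\,dt.$$
Because $C$ is a compact subset of $(0,1)$, the distance $\epsilon_0 := d(C,\{0,1\}) > 0$. For $x \in C$ and $|h_n t| \leq \epsilon_0$, the point $x - h_n t$ lies in $[0,1]$, so the $\alpha$-Lipschitz property of $f$ yields $|f(x - h_n t) - f(x)| \leq L_f (h_n |t|)^\alpha$. Elsewhere, the crude bound $|f(x - h_n t) - f(x)| \leq 2M$ is the best available, precisely because the extension by $0$ introduces jumps at $0$ and $1$.

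For (ii), $K$ is supported on some $[-A, A]$. Choose $n$ large enough that $A h_n \leq \epsilon_0$; then the integrand is nonzero only where the Lipschitz bound applies, and
$$\normC{g_n - f} \leq L_f h_n^\alpha \int_{-A}^{A} |t|^\alpha K(t)\,dt = \grando{h_n^\alpha},$$
uniformly in $x \in C$, since $K$ is bounded and the integral is finite.

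For (i), split at a threshold $\delta_n \to 0$ to be chosen, with $\delta_n \leq \epsilon_0$ eventually:
$$\normC{g_n - f} \leq \underbrace{\int_{|h_n t| \leq \delta_n}\!\!|f(x{-}h_n t) - f(x)| K(t)\,dt}_{I_1} + \underbrace{\int_{|h_n t| > \delta_n}\!\!|f(x{-}h_n t) - f(x)| K(t)\,dt}_{I_2}.$$
On the first range, the Lipschitz bound gives $I_1 \leq L_f \delta_n^\alpha$. On the second, use $|f(x-h_n t) - f(x)| \leq 2M$ and a Markov-type bound: since $t^2 K(t)$ is integrable,
$$I_2 \leq 2M \int_{|t| > \delta_n/h_n} K(t)\,dt \leq 2M\,\frac{h_n^2}{\delta_n^2} \int t^2 K(t)\,dt = \grando{\frac{h_n^2}{\delta_n^2}}.$$
Balancing the two terms via $\delta_n^\alpha \asymp h_n^2/\delta_n^2$ forces $\delta_n = h_n^{2/(\alpha+2)}$, which delivers the announced rate $\grando{h_n^{2\alpha/(\alpha+2)}}$.

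The main obstacle, and the reason both parts require the compactness of $C$, is that $f$ is \emph{not} Lipschitz on the whole real line: the Lipschitz control of the increment $f(x - h_n t) - f(x)$ is only available as long as $x - h_n t$ stays in $[0,1]$. This forces either the support of $K$ to be small relative to $\epsilon_0$ (case (ii)) or the explicit splitting and optimization over $\delta_n$ (case (i)); the integrability of $x^2 K(x)$ is exactly what controls the tail mass $I_2$ in that optimization.
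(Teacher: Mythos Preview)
Your proof is correct and follows essentially the same Bochner-type decomposition as the paper's own argument: the same change of variables, the same split at a threshold $\delta_n$, the same Markov-type tail bound via the integrability of $t^2 K(t)$, and the same balancing choice $\delta_n = h_n^{2/(\alpha+2)}$. The only cosmetic difference is that you change variables to $t = (x-y)/h_n$ at the outset and bound $I_1$ via $L_f(h_n|t|)^\alpha$ (giving the constant $\int |t|^\alpha K(t)\,dt$ in part (ii)), whereas the paper splits in the unscaled variable and uses the cruder sup bound $L_f\delta_n^\alpha$ directly; both lead to the same rates.
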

\noindent Let us note that in situation \Bu, $1/k_n^\alpha=o(h_n^\alpha)$.  
Thus, as a simple consequence of Lemma~\ref{lembiais1} and
Lemma~\ref{lembiais2}, we get:
\begin{prop}
\label{propbiais}
\hfill
\begin{description}
\item [(i)] Under \Au :
$\displaystyle \normC{f_n-f}=\grando{\frac{1}{h_n k_n^\alpha}}+\grando{\frac{1}{h_n^{1+\beta} k_n^\beta}}
+ \grando{h_n^{\frac{2\alpha}{\alpha+2}}}$.
\item [(ii)] Under \Bu :
$\displaystyle \normC{f_n-f}=\grando{\frac{1}{h_n^2 k_n^2}}+\grando{h_n^\alpha}$.
\end{description}
In either case, $f_n$ converges uniformly locally to $f$. 
\end{prop}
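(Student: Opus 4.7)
The plan is simply to use the triangle inequality
\[
\normC{f-f_n} \leq \normsup{f_n-g_n} + \normC{g_n-f}
\]
already displayed just before the statement, bounding the first summand via Lemma~\ref{lembiais1} and the second via Lemma~\ref{lembiais2}, and matching cases (i)--(ii) of the proposition to the corresponding cases of the two lemmas. The real work is therefore limited to verifying that the hypotheses of each lemma are in fact contained in \Au\ and \Bu\ respectively, and to collecting the resulting terms.

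For (i), Lemma~\ref{lembiais1}(i) applies directly under \Au. Lemma~\ref{lembiais2}(i) requires only that $x\to x^2 K(x)$ be integrable, which is explicitly part of \Au. Summing the two rates gives the three $\grando{\cdot}$ terms stated. Convergence of each summand follows at once from the conditions $h_n k_n^\alpha\to\infty$, $h_n^{1+\beta}k_n^\beta\to\infty$, and $h_n\to 0$ imposed by \Au.

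For (ii), Lemma~\ref{lembiais1}(ii) applies directly under \Bu, and Lemma~\ref{lembiais2}(ii) requires only that $K$ have compact support, again supplied by \Bu. Summing the two bounds formally produces
\[
\grando{1/(h_n^2 k_n^2)} + \grando{1/k_n^\alpha} + \grando{h_n^\alpha}.
\]
The only point that deserves care, and essentially the only potential obstacle, is absorbing the middle term into the third. This is precisely the remark made between Lemma~\ref{lembiais2} and the statement of the proposition: under \Bu\ one has $h_n k_n\to\infty$, hence $1/k_n=\petito{h_n}$ and therefore $1/k_n^\alpha=\petito{h_n^\alpha}$. The middle term is thus absorbed, yielding the announced two-term bound. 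Local uniform convergence then follows from $h_n^2 k_n^2\to\infty$ and $h_n\to 0$. Apart from this absorption, the argument is purely an accounting exercise on top of the two preceding lemmas.
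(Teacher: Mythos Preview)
Your proof is correct and follows exactly the paper's approach: apply the triangle inequality, invoke Lemma~\ref{lembiais1} and Lemma~\ref{lembiais2} under the respective hypotheses, and in case~\Bu\ absorb the $1/k_n^\alpha$ term into $h_n^\alpha$ via the observation $1/k_n^\alpha = o(h_n^\alpha)$ noted just before the proposition. There is nothing to add.
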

Applying Proposition~\ref{propbiais} to the function $\indic{[0,1]}$ leads to
the following corollary which will reveal useful in the following.
\begin{Coro}
\label{corosom}
Under the conditions of Proposition~\ref{propbiais},
$$
\lim_{n\to\infty} \normC{\frac{1}{k_n}\somr K_n(.-x_r) -1 } = 0.
$$
\end{Coro}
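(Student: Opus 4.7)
The strategy is simply to recognize the quantity appearing inside the norm as an instance of the operator already treated in Proposition~\ref{propbiais}, applied to a trivial target function. Define $\varphi = \indic{[0,1]}$. Then $\varphi$ is bounded, measurable, strictly positive on $[0,1]$, vanishes outside, and is (trivially) $\alpha$-Lipschitz on $[0,1]$ with Lipschitz constant $L_\varphi = 0$. In particular, $\varphi$ satisfies every regularity hypothesis imposed on the target function $f$ in Section~\ref{defs}, so that the conclusions of Lemma~\ref{lembiais1}, Lemma~\ref{lembiais2} and Proposition~\ref{propbiais} apply verbatim with $f$ replaced by $\varphi$.

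Next I would simply write out the discretized approximant of $\varphi$. Since each abscissa $x_r$ belongs to $\inr \subset [0,1]$, we have $\varphi(x_r) = 1$ for every $r \in \{1,\ldots,k_n\}$, hence
$$
\varphi_n(x) = \frac{1}{k_n}\somr K_n(x-x_r)\,\varphi(x_r) = \frac{1}{k_n}\somr K_n(x-x_r).
$$
This is precisely the expression appearing inside the $\normC{\cdot}$ in the statement.

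Finally, I would invoke Proposition~\ref{propbiais} directly: under either assumption \Au\ or \Bu, $\normC{\varphi_n - \varphi} \to 0$ as $n \to \infty$. On any compact $C \subset\; ]0,1[$, we have $\varphi \equiv 1$, so this is exactly the claimed convergence. No serious obstacle should arise, since the only nontrivial thing to check is that the proposition genuinely applies to the indicator; this reduces to observing that constancy on $[0,1]$ is a degenerate (zero constant) case of the Lipschitz condition, while the discontinuities at $0$ and $1$ are harmless because the convergence in Proposition~\ref{propbiais} is only asserted on compact subsets of $]0,1[$.
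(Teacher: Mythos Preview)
Your argument is correct and is exactly the paper's own approach: the authors obtain the corollary by applying Proposition~\ref{propbiais} to the function $\indic{[0,1]}$, for which $\varphi_n$ reduces to $\frac{1}{k_n}\somr K_n(\cdot-x_r)$ and $\varphi\equiv 1$ on any compact $C\subset\,]0,1[$.
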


\section{Estimate convergences }
\label{sectionestim}

This section is devoted to the study of the stochastic convergence of $\fnchap$
to $f$. We establish sufficient conditions for 
mean square local uniform convergence and almost complete local uniform convergence.

\subsection{Mean square local uniform convergence }

In this paragraph, we give sufficient conditions for
$$
\sup_{x\in C} \E\left[(\fnchapx-f(x))^2\right] \to 0 \mbox{ as } n \to \infty,
$$
where $C$ is compact subset of $]0,1[$.
The well-known expansion
$$
\E\left[(\fnchapx-f(x))^2\right]=\left[\E(\fnchapx)-f(x)\right]^2+\Var(\fnchapx)
$$
allows one to consider the bias term and the variance term separately.
The two following lemmas are devoted to the bias which splits in turn as
$$
\normC{\E(\fnchap)-f}\leq \normC{\E(\fnchap)-f_n} + \normC{f_n-f}.
$$
\begin{Lem} 
\label{lembiais3}
Suppose $\Bd$ is verified.  Under \Au~or \Bu:
$\displaystyle \normC{\E(\fnchap)-f_n}=\grando{{k_n}/{n}}$.
\end{Lem}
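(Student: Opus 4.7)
The plan is to start from the direct decomposition
\begin{equation*}
\E(\fnchapx)-f_n(x)=\frac{1}{k_n}\somr K_n(x-x_r)\bigl[\E(\xnret)-f(x_r)\bigr],
\end{equation*}
which reduces matters to controlling $\E(\xnret)-f(x_r)$ uniformly in $r$, and then summing against the kernel weights.

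To estimate $\E(\xnret)-f(x_r)$, I would combine Lemma~\ref{lemespe}(i) with the Lipschitz regularity of $f$. Lemma~\ref{lemespe}(i) gives $\E(\xnret)=k_n\lnr-k_n/(nc)+\grando{n/k_n^{1+2\alpha}}$ uniformly in $r$, while $\alpha$-Lipschitzness yields $|k_n\lnr-f(x_r)|\leq L_f/k_n^\alpha$ uniformly in $r$. Hence
\begin{equation*}
\max_{r}\Bigl|\E(\xnret)-f(x_r)+\frac{k_n}{nc}\Bigr|=\grando{\frac{1}{k_n^\alpha}}+\grando{\frac{n}{k_n^{1+2\alpha}}}.
\end{equation*}
Now I would invoke the assumption \Ad, which says $n=\petito{k_n^{1+\alpha}}$: this makes $1/k_n^\alpha=\petito{k_n/n}$ and $n/k_n^{1+2\alpha}=(n/k_n^{1+\alpha})\cdot(1/k_n^\alpha)=\petito{k_n/n}$. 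Therefore uniformly in $r$,
\begin{equation*}
\E(\xnret)-f(x_r)=-\frac{k_n}{nc}+\petito{\frac{k_n}{n}}.
\end{equation*}

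Substituting back and letting $S_n(x)=\frac{1}{k_n}\somr K_n(x-x_r)$, I get
\begin{equation*}
\E(\fnchapx)-f_n(x)=-\frac{k_n}{nc}\,S_n(x)+\petito{\frac{k_n}{n}}\,S_n(x).
\end{equation*}
Under either \Au{} or \Bu, Corollary~\ref{corosom} gives $\normC{S_n-1}\to 0$, so $S_n$ is uniformly bounded on the compact $C\subset\,]0,1[$ for $n$ large. Taking the sup over $x\in C$ then yields $\normC{\E(\fnchap)-f_n}=\grando{k_n/n}$, as desired.

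The only real subtlety is the bookkeeping on the error terms: one must verify carefully that both the discretization error $1/k_n^\alpha$ coming from the Lipschitz bound and the $n/k_n^{1+2\alpha}$ error from Lemma~\ref{lemespe}(i) are absorbed into $\petito{k_n/n}$; this is exactly where the two-sided constraint $k_n=\petito{n/\ln n}$ and $n=\petito{k_n^{1+\alpha}}$ in \Ad{} is used. Everything else is a straightforward triangle inequality combined with the already-established uniform approximation of unity by the convolved kernel.
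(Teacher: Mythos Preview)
Your argument is correct and follows essentially the same route as the paper: both start from the identity $\E(\fnchapx)-f_n(x)=\frac{1}{k_n}\somr K_n(x-x_r)[\E(\xnret)-f(x_r)]$, control $\max_r|\E(\xnret)-f(x_r)|$ via Lemma~\ref{lemespe}(i) together with the Lipschitz bound $|k_n\lnr-f(x_r)|\leq L_f/k_n^\alpha$, and then use Corollary~\ref{corosom} to bound the kernel sum $S_n(x)$ uniformly on $C$. The only cosmetic difference is that you keep track of the sign of the leading term $-k_n/(nc)$ whereas the paper passes immediately to absolute values; also, your closing remark slightly overstates the role of \Ad, since only the constraint $n=\petito{k_n^{1+\alpha}}$ is actually needed to absorb the two residual terms into $\petito{k_n/n}$.
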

\noindent As a consequence of Lemma~\ref{lembiais3} and Proposition~\ref{propbiais}, we obtain
the behavior of the bias:
\begin{Lem} 
\label{lembiais4}
Suppose $\Bd$ is verified.  
\begin{description}
\item [(i)] Under \Au:
$$
\normC{\E(\fnchap)-f}=\grando{\frac{k_n}{n}}+\grando{\frac{1}{h_n k_n^\alpha}}+\grando{\frac{1}{h_n^{1+\beta} k_n^\beta}} + \grando{h_n^{\frac{2\alpha}{\alpha+2}}}.
$$
\item [(ii)] Under \Bu:
\begin{equation}
\label{biaisori}
\normC{\E(\fnchap)-f}=\grando{\frac{k_n}{n}}+\grando{\frac{1}{h_n^2 k_n^2}}+\grando{h_n^\alpha}.
\end{equation}
\end{description}
\end{Lem}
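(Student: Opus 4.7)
The plan is straightforward: this lemma is essentially a packaging result, combining the already-established bounds on the two natural components of the bias. The heavy machinery has been done in Lemma~\ref{lembiais3} (which controls the ``stochastic bias'' of $\fnchap$ relative to the deterministic discretized convolution $f_n$) and in Proposition~\ref{propbiais} (which controls the ``deterministic bias'' of $f_n$ relative to $f$).

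First I would invoke the triangle inequality, splitting
$$
\normC{\E(\fnchap)-f}\leq \normC{\E(\fnchap)-f_n} + \normC{f_n-f}.
$$
This separates a term that depends on the Poisson observation model (only the expectations of the $\xnret$ matter here) from a purely analytic term.

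Next I would control the first summand by direct appeal to Lemma~\ref{lembiais3}: under either \Au{} or \Bu, together with \Ad, one obtains $\normC{\E(\fnchap)-f_n}=\grando{k_n/n}$. The mechanism behind this is that $\E(\fnchap)(x)-f_n(x)=\frac{1}{k_n}\somr K_n(x-x_r)\bigl(\E(\xnret)-f(x_r)\bigr)$, and Lemma~\ref{lemespe}(i) shows that each summand contributes a term of order $k_n/n$, uniformly controlled provided $\somr |K_n(x-x_r)|/k_n$ stays bounded (which follows from Corollary~\ref{corosom}).

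Finally I would apply Proposition~\ref{propbiais} directly: in case~(i), under \Au, it gives
$$
\normC{f_n-f}=\grando{\tfrac{1}{h_nk_n^{\alpha}}}+\grando{\tfrac{1}{h_n^{1+\beta}k_n^{\beta}}}+\grando{h_n^{2\alpha/(\alpha+2)}},
$$
and in case~(ii), under \Bu, it gives $\grando{1/(h_n^2k_n^2)}+\grando{h_n^{\alpha}}$. Adding the $\grando{k_n/n}$ contribution from the first step yields the announced expressions in~(i) and~(ii), absorbing the $1/k_n^{\alpha}$ term in case \Bu{} via the remark that $1/k_n^{\alpha}=\petito{h_n^{\alpha}}$ noted after Lemma~\ref{lembiais2}.

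There is no real obstacle here: the only thing to be careful about is that the hypotheses match in each case—Lemma~\ref{lembiais3} requires \Bd{} (which is assumed) and either \Au{} or \Bu, and Proposition~\ref{propbiais} requires the same structural hypothesis on $K$ and $(h_n,k_n)$. So the bookkeeping of hypotheses is the only nontrivial check, and the statement then follows by a one-line application of the triangle inequality.
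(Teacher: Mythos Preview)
Your proposal is correct and matches the paper's own argument exactly: the paper presents Lemma~\ref{lembiais4} simply as ``a consequence of Lemma~\ref{lembiais3} and Proposition~\ref{propbiais}'' via the triangle-inequality split $\normC{\E(\fnchap)-f}\leq \normC{\E(\fnchap)-f_n}+\normC{f_n-f}$ stated just above. One small remark: in case \Bu{} the absorption of the $1/k_n^{\alpha}$ term already happens inside Proposition~\ref{propbiais}(ii), so there is nothing further to absorb at this stage.
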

To conclude, it remains to consider the variance term.
\begin{Lem} 
\label{lemvariance}
Suppose $\Bd$ is verified.  Under \Au~or \Bu:
$$
\lim_{n\to\infty} \normC{\frac{\Var(\fnchap)}{\sigma_n^2} - \sigma^2} = 0,
$$
where ${\sigma}_n=\frac{k_n^{1/2}}{n h_n^{1/2}}$ 
and $\sigma=\frac{\normdeux{K}}{c}$.
\end{Lem}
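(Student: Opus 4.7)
The plan is to exploit the independence of the $(\xnret)$ across disjoint Poisson cells, use Lemma~\ref{lemespe}(ii) to extract the dominant variance $k_n^2/(n^2c^2)$, and then identify the limit of the resulting normalized sum via the same Bochner-type technology already used in Lemmas~\ref{lembiais1}--\ref{lembiais2}, but now applied to $K^2$ rather than $K$.

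First, because $N$ is Poisson and the cells $\Dnr$ are pairwise disjoint, the variables $\xnret$ are mutually independent, so
\[
\Var(\fnchapx)=\frac{1}{k_n^2}\somr K_n(x-x_r)^2\,\Var(\xnret).
\]
Lemma~\ref{lemespe}(ii) gives $\Var(\xnret)=k_n^2/(n^2c^2)+\grando{k_n^{-2\alpha}}$ uniformly in $r$. Substituting and dividing by $\sigma_n^2=k_n/(n^2h_n)$ yields
\[
\frac{\Var(\fnchapx)}{\sigma_n^2}=\frac{1}{c^2}\,T_n(x)+\grando{\frac{n^2}{k_n^{2+2\alpha}}}\,T_n(x),\qquad T_n(x):=\frac{1}{k_nh_n}\somr K\!\left(\frac{x-x_r}{h_n}\right)^2.
\]
Assumption \Bd~gives $n=\petito{k_n^{1+\alpha}}$, so the prefactor of the remainder tends to zero, and everything reduces to proving that $T_n(x)\to\normdeux{K}^2$ uniformly on $C$.

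Next I compare $T_n(x)$ to its continuous analogue
\[
J_n(x):=\int_0^1\frac{1}{h_n}K\!\left(\frac{x-y}{h_n}\right)^2 dy=\int_{(x-1)/h_n}^{x/h_n}K(u)^2\,du.
\]
Since $x$ ranges over a compact $C\subset (0,1)$, both integration limits diverge uniformly to $\pm\infty$; integrability of $K^2$ (a standing assumption on the kernel) then yields by dominated convergence that $J_n(x)\to\normdeux{K}^2$ uniformly on $C$.

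It remains to control $\normC{T_n-J_n}$, which is the main technical step and proceeds exactly as in the proof of Lemma~\ref{lembiais1}, with $K^2$ playing the role of $K$. Under \Au, the H\"older-type bound $|K(u)^2-K(v)^2|\le 2(\sup K)L_K|u-v|^\beta$, together with the integrability of $u\mapsto u^2K(u)^2$ (a consequence of the boundedness of $K$ and the integrability of $u\mapsto u^2K(u)$), produces an error $\grando{1/(h_n^{1+\beta}k_n^\beta)}+\grando{1/(h_nk_n^\alpha)}=\petito{1}$. Under \Bu, the compact support and piecewise $C^2$ regularity of $K$ transfer to $K^2$, giving an error $\grando{1/(h_n^2k_n^2)}=\petito{1}$. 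Combining the three steps gives $\Var(\fnchap)/\sigma_n^2\to\normdeux{K}^2/c^2=\sigma^2$ uniformly on $C$. The main obstacle is therefore this last Riemann-sum approximation, but once the regularity of $K$ has been transferred to $K^2$ it mirrors the computation already carried out for $f_n-g_n$ in Lemma~\ref{lembiais1}.
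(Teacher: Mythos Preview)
Your proof is correct and follows the same skeleton as the paper's: independence of the $\xnret$, Lemma~\ref{lemespe}(ii) for the cell variances, and then the uniform convergence of $T_n(x)=\frac{1}{h_nk_n}\sum_r K^2((x-x_r)/h_n)$ to $\normdeux{K}^2$. The only difference is packaging of this last step. The paper notes that $K^2/\normdeux{K}^2$ is again a bounded Parzen--Rosenblatt kernel inheriting \Au\ or \Bu\ from $K$, and then invokes Corollary~\ref{corosom} with this new kernel to obtain $\Delta K_n:=\normC{T_n-\normdeux{K}^2}\to 0$ in one line; the whole proof is then the single inequality $\normC{\Var(\fnchap)/\sigma_n^2-\sigma^2}\le \Delta K_n(\Delta V_n+1/c^2)+\Delta V_n\normdeux{K}^2$. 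You instead unpack Corollary~\ref{corosom} by hand, inserting the intermediate integral $J_n$ and replaying the Lemma~\ref{lembiais1}/Lemma~\ref{lembiais2} arguments for $K^2$. This is legitimate but longer, and it carries a superfluous $\grando{1/(h_nk_n^\alpha)}$ term: in your Riemann-sum comparison the role of ``$f$'' is played by the constant $\indic{[0,1]}$, so the Lipschitz contribution from (\ref{terme1}) actually vanishes. The paper's route is more economical precisely because Corollary~\ref{corosom} was designed as a reusable black box; once you have checked that $K^2$ inherits the required regularity from $K$ (which you do), invoking it directly is the cleaner move.
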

\noindent In situation \Bu, $\sigma_n=o(k_n/n)$, and therefore
the variance of the estimator is small with respect to the bias.  
In both situations, as a consequence of Lemma~\ref{lembiais4} and Lemma~\ref{lemvariance}, we get:
\begin{Theo}
\label{cvumq}
Suppose $\Bd$ is verified.  
\begin{description}
\item [(i)] Under \Au:
$$
\normC{\E(\fnchap-f)^2}=
\grando{\frac{k_n^2}{n^2}}+\grando{\frac{1}{h_n^2 k_n^{2\alpha}}}+\grando{\frac{1}{h_n^{2+2\beta} k_n^{2\beta}}} + \grando{h_n^{\frac{4\alpha}{\alpha+2}}} + \grando{\frac{k_n}{n^2 h_n}}.
$$
\item [(ii)] Under \Bu:
$$
\normC{\E(\fnchap-f)^2}= \grando{\frac{k_n^2}{n^2}}+\grando{\frac{1}{h_n^4 k_n^4}}+\grando{h_n^{2\alpha}}.
$$
\end{description}
In either case, the mean square local uniform convergence of $\fnchap$ to $f$ follows.
\end{Theo}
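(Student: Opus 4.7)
The plan is to combine the already-established bias estimate of Lemma~\ref{lembiais4} with the variance estimate of Lemma~\ref{lemvariance} via the pointwise decomposition
$$
\E\bigl[(\fnchap(x)-f(x))^2\bigr]
 = \bigl(\E\fnchap(x)-f(x)\bigr)^2 + \Var(\fnchap(x)),
$$
then take the supremum over $x\in C$. No new probabilistic argument is needed; the work is purely bookkeeping.

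First I would square the bias bound. For any finite collection of positive terms, $(\sum_{i=1}^m a_i)^2 \le m \sum_{i=1}^m a_i^2$, so a $\grando{a_1+\cdots+a_m}$ bound on $\normC{\E\fnchap-f}$ yields $\grando{a_1^2+\cdots+a_m^2}$ on its square. Applying this under \Au\ to the four-term bound of Lemma~\ref{lembiais4}(i) produces the first four $O(\cdot)$ contributions in statement (i); applying it under \Bu\ to the three-term bound of Lemma~\ref{lembiais4}(ii) produces the three $O(\cdot)$ contributions in statement (ii).

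Next I would handle the variance. Lemma~\ref{lemvariance} gives $\Var(\fnchap(x))/\sigma_n^2 \to \sigma^2$ uniformly on $C$, hence $\sup_{x\in C}\Var(\fnchap(x)) = \grando{\sigma_n^2} = \grando{k_n/(n^2 h_n)}$. Adding this to the squared bias gives the final $\grando{k_n/(n^2 h_n)}$ summand in (i). For (ii), the extra ingredient is the observation
$$
\frac{k_n}{n^2 h_n}
 = \frac{1}{h_n k_n}\cdot\frac{k_n^2}{n^2}
 = \petito{\frac{k_n^2}{n^2}}
$$
because \Bu\ assumes $h_n k_n\to\infty$. The variance is therefore absorbed by the $k_n^2/n^2$ term and does not appear explicitly in the (ii) bound.

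Finally, the mean square local uniform convergence is obtained by checking that, under the standing assumptions \Au$+$\Ad\ (resp.\ \Bu$+$\Ad), each of the $O(\cdot)$ summands tends to zero: $k_n/n\to 0$ from \Ad, the $h_n$-$k_n$ terms from the growth conditions $h_nk_n^\alpha\to\infty$, $h_n^{1+\beta}k_n^\beta\to\infty$ (resp.\ $h_nk_n\to\infty$) together with $h_n\to 0$, and $k_n/(n^2h_n)\to 0$ from the same growth assumptions combined with $k_n=\petito{n}$. This is the only mildly delicate part of the argument, but it is routine once the $O(\cdot)$ formula has been assembled.
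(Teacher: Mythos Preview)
Your proposal is correct and matches the paper's approach exactly: the paper simply records the theorem as an immediate consequence of Lemma~\ref{lembiais4} (squared bias) and Lemma~\ref{lemvariance} (variance), and notes just before the statement that $\sigma_n=o(k_n/n)$ under \Bu, which is precisely your absorption argument for case~(ii). Your write-up in fact supplies more detail than the paper does.
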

In situation \Bu, choosing $k_n=n^{\frac{\alpha+2}{3\alpha+2}}$
and $h_n=n^{-\frac{2}{3\alpha+2}}$ yields 
$$
\normC{\E(\fnchap-f)^2}=\grando{n^{-\frac{4\alpha}{3\alpha+2}}},
$$
and thus, we obtain the following bound for the $L_1$ norm:
\begin{equation}
\label{vitessebiaise}
\E\left(\normun{\fnchap-f}\right)=\E\left(\int_0^1 \Abs{\fnchap(x)-f(x)}dx\right)
\leq \left(\normC{\E(\fnchap-f)^2}\right)^{1/2}=\grando{n^{-\frac{\alpha}{1+\frac{3}{2}\alpha}}}.
\end{equation}
As a comparison, the minimax rate in the $n$-sample case is
$n^{-\frac{\alpha}{1+ \alpha }}$ and is reached by Geffroy's estimate.  
A bias reduction method
will be introduced in Section~\ref{sectionreduc} in order to ameliorate the
bound~(\ref{vitessebiaise}).

\subsection{Almost complete local uniform convergence  }

We shall give sufficient conditions for the convergence of the series
$$
\forall \varepsilon >0, \qquad
\sum_{n=1}^{+\infty}\proba{\normC{\fnchap-f}>\varepsilon
}<+\infty.
$$
\begin{Theo}
\label{cvupco}
Suppose $k_n=o(n/\ln n)$.
Under \Au~or \Bu,
$\fnchap$ is almost completely locally uniformly convergent to $f$.
\end{Theo}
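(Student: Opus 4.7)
The plan is to split the error into a deterministic bias and a centered fluctuation, control the fluctuation pointwise by a Bernstein-type exponential inequality, and lift to the supremum norm by a finite discretization of $C$ exploiting the regularity of $K$; a Borel--Cantelli argument then yields almost complete convergence.

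First, I would use the decomposition $\normC{\fnchap - f} \leq \normC{\fnchap - E\fnchap} + \normC{E\fnchap - f}$. The deterministic bias tends to $0$: Proposition~\ref{propbiais} yields $\normC{f_n - f} \to 0$, and a direct expansion of $\E(\xnret)$ via the CDF formula~(\ref{eqfdr}) (in the spirit of Lemma~\ref{lembiais3}) shows $\normsup{\E\fnchap - f_n} \to 0$ under $k_n \to \infty$ and $k_n = \petito{n}$. Hence for every $\varepsilon > 0$ and every sufficiently large $n$, $\{\normC{\fnchap - f} > \varepsilon\} \subset \{\normC{\fnchap - \E\fnchap} > \varepsilon/2\}$, so it suffices to show $\sum_n \proba{\normC{\fnchap - \E\fnchap} > \varepsilon/2} < \infty$.

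Next, for fixed $x \in C$ I would write $\fnchap(x) - \E\fnchap(x) = \sum_{r=1}^{k_n} Z_{n,r}(x)$ with $Z_{n,r}(x) = \frac{1}{k_n} K_n(x - x_r)(\xnret - \E \xnret)$. The $\xnret$ are independent across $r$ because the cells $\Dnr$ are disjoint, hence the $Z_{n,r}(x)$ are independent centered variables, uniformly bounded by $2M\normsup{K}/(h_n k_n)$, with $\sum_r \Var(Z_{n,r}(x)) = \grando{\sigma_n^2}$ by Lemma~\ref{lemvariance}. Since $k_n = \petito{n}$, the linear term in Bernstein's denominator dominates at the fixed level $\varepsilon$, yielding a pointwise bound of the form
$$
\proba{\Abs{\fnchap(x) - \E\fnchap(x)} > \varepsilon/4} \leq 2\exp(-c_1\,\varepsilon\, h_n k_n)
$$
for a constant $c_1 > 0$ independent of $x$ and $n$.

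To pass to the supremum on $C$, I would cover $C$ by an $\eta_n$-net $\{y_1, \ldots, y_{N_n}\}$ and bound $\normC{\fnchap - \E\fnchap}$ by its maximum over the net plus a deterministic oscillation term. Under \Au, the $\beta$-Lipschitz property of $K$ gives $\Abs{\fnchap(x) - \fnchap(y)} \leq C M L_K \Abs{x-y}^\beta / h_n^{1+\beta}$, and the analogous bound (with $\beta = 1$) holds under \Bu\ via the bounded derivative of $K$; the same estimates transfer to $\E\fnchap$. Choosing $\eta_n$ of order $h_n^{(1+\beta)/\beta}$ makes the oscillation on each cell smaller than $\varepsilon/8$, whence $N_n$ grows only polynomially in $1/h_n$, and a union bound yields $\proba{\normC{\fnchap - \E\fnchap} > \varepsilon/2} \leq 2 N_n \exp(-c_1\,\varepsilon\, h_n k_n)$. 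The main obstacle is verifying summability of this last series: this is where the hypothesis $k_n = \petito{n/\ln n}$ enters, combined with the rate constraints built into \Au\ or \Bu, to force $h_n k_n \gg \ln N_n + \ln n$. The delicate point will be balancing the net size against the Bernstein exponent so that the logarithmic cost of discretization does not consume the exponential gain; the hypotheses are calibrated precisely for this.
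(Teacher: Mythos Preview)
Your decomposition and Bernstein bound are computed correctly, but the final step fails: the hypotheses do \emph{not} force $h_nk_n\gg\ln n$. Under \Bu\ one only has $h_nk_n\to\infty$, and under \Au\ one has $h_nk_n^\alpha\to\infty$ and $h_n^{1+\beta}k_n^\beta\to\infty$; none of these, even combined with $k_n=\petito{n/\ln n}$, gives a quantitative lower bound on $h_nk_n$ in terms of $\ln n$. For instance, under \Bu\ take $k_n=n/(\ln n)^2$ and $h_n=(\ln n)^3/n$: then $h_n\to 0$, $h_nk_n=\ln n\to\infty$, and $k_n=\petito{n/\ln n}$, yet your tail bound becomes $N_n\exp(-c_1\varepsilon\ln n)=N_n\,n^{-c_1\varepsilon}$, which is not summable once $\varepsilon$ is small. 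The obstacle you flag as ``delicate'' is in fact insurmountable along this route, because the Bernstein (or Hoeffding) exponent is governed by the effective number of nonzero summands, of order $h_nk_n$, and the theorem places no hypothesis on that quantity relative to $\ln n$.

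The paper's argument avoids this entirely by factoring out the kernel \emph{before} invoking any probability. Using Corollary~\ref{corosom} one has, uniformly in $x\in C$,
\[
\Abs{\fnchapx-f_n(x)}\leq\left(\frac{1}{k_n}\somr K_n(x-x_r)\right)\max_r\Abs{\xnret-f(x_r)}=(1+\petito{1})\max_r\Abs{\xnret-f(x_r)},
\]
so the supremum over $C$ is controlled by a single random variable with no $h_n$ in sight and no net is needed. Since $\xnret\leq\Mnr$ and $\Mnr-\mnr\to 0$, only the lower tail matters, and the explicit formula~(\ref{eqfdr}) gives
\[
\proba{\max_r\Abs{\xnret-f(x_r)}>\varepsilon}\leq k_n\exp\!\left(-\frac{nc\varepsilon}{2k_n}\right),
\]
with exponent $n/k_n$ rather than $h_nk_n$. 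The assumption $k_n=\petito{n/\ln n}$ is exactly $n/k_n\gg\ln n$, which makes this summable for every $\varepsilon>0$. In short, the right move is to pull the deterministic kernel weights outside and exploit the one-sided, explicitly known distribution of $\xnret$; routing through a centered Bernstein bound throws away precisely the structure that makes the theorem true under these weak hypotheses.
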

\begin{proof}
Let $C$ be a compact subset of $]0,1[$ and $\varepsilon>0$.
From Proposition~\ref{propbiais}, $f_n$ converges uniformly to $f$ on $C$.
It remains to consider $\normC{\fnchap-f_n}$. For $x\in C$, we have:
\begin{eqnarray*}
\Abs{\fnchapx-f_n(x)} &\leq& \frac{1}{k_n} \somr K_n(x-x_r) \max_r\Abs{\xnret - f(x_r)} \\
&\leq& \left(1+\normC{\frac{1}{k_n} \somr K_n(.-x_r)-1} \right) \max_r\Abs{\xnret - f(x_r)}\\
&\leq& \left(1 +\petito{1}\right) \max_r\Abs{\xnret - f(x_r)},
\end{eqnarray*}
with Corollary~\ref{corosom}.
Now, since $f$ is continuous on $[0,1]$, $\Mnr - \mnr < \varepsilon/2$ uniformly in $r$,
for $n$ large enough, and therefore 
$$
\left\{\max_r\Abs{\xnret-f(x_r)}> \varepsilon \right\} \subset
\bigcup_r \left\{f(x_r)-\xnret>\varepsilon \right\} \subset
\bigcup_r \left\{\xnret<\mnr-\varepsilon/2 \right\}.
$$
As a consequence,
$$
\proba{\max_r\Abs{\xnret-f(x_r)}> \varepsilon} \leq \somr \Fnr(\mnr-\varepsilon/2 ),
$$
where $\Fnr$ is given by~(\ref{eqfdr}). Then, the inequality
$$
\proba{\max_r\Abs{\xnret-f(x_r)}> \varepsilon} \leq
 k_n \exp{\left(-\frac{nc\varepsilon}{2k_n}\right)}
$$
entails the convergence of the series with $k_n=o(n/\ln n)$.  
\end{proof}

\section{Asymptotic distributions}
\label{sectionasymp}

In Theorem~\ref{thnorasymp}, we give the limiting distribution of the random variable $\fnchapx$
for a fixed $x\in C$, a compact subset of $]0,1[$. In Theorem~\ref{thnorasymp2},
 we study the asymptotic distribution
of the random vector obtained by evaluating $\fnchap$ in several distinct points
of $C$.

\begin{Theo}
\label{thnorasymp}
Suppose \Ad~is verified.
Under \Au~or \Bu,
$s_n(x)={\sigma}_n^{-1}(\fnchapx - \E(\fnchapx))$
converges in distribution to a centered Gaussian variable with variance $\sigma^2$,
for all $x\in C$.
\end{Theo}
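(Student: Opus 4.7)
The plan is to write $s_n(x)$ as a sum of independent centred random variables and apply Lyapunov's central limit theorem to the resulting triangular array. Put
$$
Z_{n,r} = \frac{1}{k_n}K_n(x-x_r)\bigl(\xnret - \E(\xnret)\bigr),
$$
so that $s_n(x) = \sigma_n^{-1}\somr Z_{n,r}$. Since the cells $\Dnr$ are pairwise disjoint, the restricted point processes $N(\,\cdot\,\cap\Dnr)$ are independent Poisson processes; hence the extrema $\xnret$, and therefore the $Z_{n,r}$, are independent.

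First I would identify the limiting variance. Inserting the expansion $\Var(\xnret)=k_n^2/(n^2c^2)+\grando{k_n^{-2\alpha}}$ from Lemma~\ref{lemespe}(ii) into $\Var(\fnchapx) = k_n^{-2}\somr K_n(x-x_r)^2\Var(\xnret)$ yields
$$
\Var(\fnchapx) = \frac{1}{n^2c^2h_n^2}\somr K\!\left(\frac{x-x_r}{h_n}\right)^2 + \grando{\frac{1}{k_n^{2+2\alpha}h_n^2}\somr K\!\left(\frac{x-x_r}{h_n}\right)^2}.
$$
A Riemann sum argument, based on the regularity of $K$ prescribed by \Au~or \Bu~and on $k_nh_n\to\infty$ (which holds in both cases since $k_n\geq k_n^{\alpha}$), shows $(k_nh_n)^{-1}\somr K((x-x_r)/h_n)^2\to\normdeux{K}^2$. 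Hence the leading term is $\sigma_n^2\sigma^2(1+\petito{1})$, while the remainder is $\grando{(k_n^{1+2\alpha}h_n)^{-1}}$, which is $\petito{\sigma_n^2}$ exactly under the \Bd~hypothesis $n=\petito{k_n^{1+\alpha}}$.

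Next I would check Lyapunov's condition on third absolute moments. Lemma~\ref{lemespe}(iii) together with the same Riemann sum technique (valid because $K^3$ is integrable by assumption) gives
$$
\somr \E\Abs{Z_{n,r}}^3 \leq \frac{\grando{k_n^3/n^3}}{k_n^3h_n^3}\somr K\!\left(\frac{x-x_r}{h_n}\right)^3 = \grando{\frac{k_n}{n^3h_n^2}}.
$$
Dividing by $\Var(\fnchapx)^{3/2}\sim\sigma_n^3\sigma^3$ produces a Lyapunov ratio of order $(k_nh_n)^{-1/2}$, which tends to $0$ since $k_nh_n\to\infty$ under either \Au~or \Bu. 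Lyapunov's theorem then yields convergence in distribution of $\sigma_n^{-1}(\fnchapx-\E\fnchapx)/\sqrt{\Var(\fnchapx)/\sigma_n^2}$ to the standard normal; multiplying by $\sqrt{\Var(\fnchapx)/\sigma_n^2}\to\sigma$ gives the announced Gaussian limit.

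The main obstacle will be the two Riemann sum approximations. In case \Au~one must exploit the $\beta$-Lipschitz property of $K$ together with integrability of $x^2K(x)$ (which also forces integrability of $K^2$ and $K^3$) to quantify the error uniformly in $r$; in case \Bu~the compact support and piecewise $C^2$ regularity provide even tighter control. A secondary concern is that the summation is effectively over points in $[0,1]$ rather than over all of $\R$, but since $x$ belongs to a compact subset of $]0,1[$ and $h_n\to 0$, the boundary contribution is negligible in the limit.
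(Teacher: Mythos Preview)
Your proposal is correct and follows essentially the same route as the paper: both decompose $s_n(x)$ as a sum of independent centred terms built from the $\xnret$, invoke Lemma~\ref{lemespe} for the second and third moments, and verify Lyapunov's condition, arriving at the same ratio of order $(k_nh_n)^{-1/2}$. The only cosmetic difference is that the paper packages your ``Riemann sum arguments'' into prior results (Lemma~\ref{lemvariance} for the variance limit and Corollary~\ref{corosom} applied to $K^2/\normdeux{K}^2$ and $K^3/\int K^3$ for the sums), so you could cite those directly rather than redoing the approximation inline.
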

\begin{proof}
Let $x\in C$ be fixed. Introducing the $k_n$ independent random variables
$$
\ynr(x)=\frac{n}{k_n^{3/2}h_n^{1/2}}K\left(\frac{x-x_r}{h_n}\right) (\xnret-\E\xnret),
$$
the quantity $s_n(x)$ can be rewritten as
$$
s_n(x)=\somr \ynr(x).
$$
Our goal is to prove that the Lyapounov condition
$$
\lim_{n\to\infty} \somr \frac{\E\left(\Abs{\ynr(x)}^3\right)}{\Var^{3/2}(s_n(x))} = 0
$$
holds under condition \Au, \Ad~or \Bu, \Bd. Remark first that
$$
\Var(s_n(x))={ \Var(\fnchapx) }/{\sigma_n^2} \to \sigma^2,
$$
as $n\to\infty$ with Lemma~\ref{lemvariance}. Second, we have
\begin{eqnarray*}
\somr \E\left(\Abs{\ynr(x)}^3\right)&\leq &\frac{n^3}{h_n^{3/2}k_n^{9/2}} \somr K^3\left(\frac{x-x_r}{h_n}\right) \max_r \E\left(\Abs{\xnret-E(\xnret)}^3\right)\\
&\leq & \left(\frac{1}{k_nh_n} \somr K^3\left(\frac{x-x_r}{h_n}\right)\right) \grando{\frac{1}{k_n^{1/2} h_n^{1/2}}},
\end{eqnarray*}
with Lemma~\ref{lemespe}. Then,
\begin{eqnarray*}
\frac{1}{h_n k_n} \somr K^3\left(\frac{x-x_r}{h_n}\right)&\leq& \int_{\R} K^3(u)du + \normC{\frac{1}{h_n k_n} \somr K^3\left(\frac{.-x_r}{h_n}\right) - \int_{\R} K^3(u)du}\\
&\leq& \int_{\R} K^3(u)du +\petito{1},
\end{eqnarray*}
with Corollary~\ref{corosom} applied to the kernel $K^3/\int K^3(u)du$. As a conclusion,
$$
\somr \frac{\E\left(\Abs{\ynr(x)}^3\right)}{\Var^{3/2}(s_n(x))} = 
\grando{\frac{1}{h_n^{1/2} k_n^{1/2}}},
$$
and the result follows.
\end{proof}

\begin{Theo}
\label{thnorasymp2}
Let $(y_1,\dots,y_q)$ be distinct points in $C$ and denote ${\mathbb I}_q$
the identity matrix of size~$q$. Under the conditions of Theorem~\ref{thnorasymp},
the random vector $(s_n(y_j),\;j=1,\dots,q)$ 
converges in distribution to a centered Gaussian vector of $\R^q$ with covariance
matrix  $\sigma^2 {\mathbb I}_q$.
\end{Theo}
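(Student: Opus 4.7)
The approach is a Cram\'er--Wold reduction to the one-dimensional case already treated in Theorem~\ref{thnorasymp}. For arbitrary $(t_1,\dots,t_q)\in\R^q$ I would set $T_n=\sum_{j=1}^q t_j s_n(y_j)$ and aim to show that $T_n$ converges in distribution to a centered Gaussian with variance $\sigma^2\sum_{j=1}^q t_j^2$, which forces the joint covariance matrix of the limit to be exactly $\sigma^2\mathbb{I}_q$. Writing $T_n=\somr Z_{n,r}$ with
$Z_{n,r}=\frac{n}{k_n^{3/2}h_n^{1/2}}\left(\sum_{j=1}^q t_j K\!\left(\frac{y_j-x_r}{h_n}\right)\right)(\xnret-\E\xnret)$
yields independent summands, since the cells $\Dnr$ are pairwise disjoint and $N$ is Poisson. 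Lyapounov's theorem then provides asymptotic normality once the variance is identified and a third-moment condition is checked.

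For the asymptotic variance I would expand $\Var(T_n)=\sum_j t_j^2\Var(s_n(y_j))+2\sum_{i<j}t_i t_j \mathrm{Cov}(s_n(y_i),s_n(y_j))$. The diagonal terms tend to $\sigma^2\sum_j t_j^2$ by Lemma~\ref{lemvariance}. The off-diagonal terms are the key new ingredient: using independence of the $\xnret$ across $r$ and $\sigma_n^{-2}=n^2h_n/k_n$,
$\mathrm{Cov}(s_n(y_i),s_n(y_j))=\frac{n^2}{k_n^3 h_n}\somr K\!\left(\frac{y_i-x_r}{h_n}\right)K\!\left(\frac{y_j-x_r}{h_n}\right)\Var(\xnret).$
Inserting Lemma~\ref{lemespe}(ii), the leading piece is a constant multiple of $\frac{1}{k_n h_n}\somr K(\cdot)\,K(\cdot)$, which is $\petito{1}$ by Corollary~\ref{coroK} since $y_i\neq y_j$; the remainder, bounded using $\somr K(\cdot)\,K(\cdot)=\grando{h_n k_n}$ (Corollary~\ref{corosom} applied to $K$), is of order $n^2/k_n^{2+2\alpha}$ and vanishes thanks to the upper bound $n=\petito{k_n^{1+\alpha}}$ in \Ad.

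For Lyapounov's condition I would follow the scheme of Theorem~\ref{thnorasymp} verbatim. Expanding the cube $(\sum_j t_j K(\cdot))^3$ splits $\somr\E|Z_{n,r}|^3$ into diagonal contributions $\somr K^3((y_j-x_r)/h_n)=\grando{h_n k_n}$ (Corollary~\ref{corosom} applied to $K^3/\int K^3$) and mixed contributions controlled by Corollary~\ref{coroK}; combined with Lemma~\ref{lemespe}(iii), the total is $\grando{(k_n h_n)^{-1/2}}$, which tends to 0 under either \Au\ or \Bu.

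The principal obstacle is Step~2: showing that the cross covariances vanish, which is what makes the limiting covariance diagonal. Everything there hinges on the decoupling supplied by Corollary~\ref{coroK} for distinct evaluation points, together with the sharp variance expansion of Lemma~\ref{lemespe}(ii) and the upper bound on $k_n$ built into \Ad.
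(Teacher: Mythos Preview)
Your proposal is correct and follows essentially the same route as the paper: the Cram\'er--Wold device reduces to a triangular-array CLT for $T_n=\somr Z_{n,r}$, the asymptotic variance is identified via Lemma~\ref{lemvariance}/Lemma~\ref{lemespe}(ii) for the diagonal and Corollary~\ref{coroK} for the cross terms, and Lyapounov's condition is checked using Lemma~\ref{lemespe}(iii). The only cosmetic difference is in the third-moment bound: instead of expanding the cube and invoking Corollary~\ref{coroK} on the mixed products, the paper uses the one-line inequality $\bigl|\sum_j t_j K(\cdot)\bigr|^3\le\normsup{K}\normun{t}\bigl(\sum_j t_j K(\cdot)\bigr)^2$ and recycles the variance computation, arriving at the same $\grando{(h_n k_n)^{-1/2}}$.
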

\begin{proof}
Our goal is to prove that, $\forall (u_1,\dots,u_q)\in\R^q$, the random variable
$$
\tilde{s}_n=\sum_{i=1}^q u_i s_n(y_i)
$$
converges in distribution to a centered Gaussian variable with variance $\normdeux{u}^2\sigma^2$.
A straightforward calculation yields
$$
\tilde{s}_n=\somr \tynr,
$$
where we have defined
$$
\tynr=\frac{n}{k_n^{3/2}h_n^{1/2}}\sum_{i=1}^q u_i K\left(\frac{y_i-x_r}{h_n}\right) (\xnret-\E\xnret).
$$
We use a chain of arguments similar to the ones in Theorem~\ref{thnorasymp} proof.
First, the variance of $\tynr$ is evaluated with Lemma~\ref{lemespe}{\bf (ii)}:
\begin{eqnarray*}
\Var(\tynr) & = & \frac{n^2}{h_n k_n^3} \left(\sum_{i=1}^q u_i K\left(\frac{y_i-x_r}{h_n}\right) \right)^2 \Var(\xnret) \\
&=& \frac{1}{c^2}\frac{1}{h_nk_n} \left(\sum_{i=1}^q u_i K\left(\frac{y_i-x_r}{h_n}\right) \right)^2
\left(1+\grando{\frac{n^2}{k_n^{2\alpha+2}}}\right) \\
&\sim& \frac{1}{c^2}\frac{1}{h_nk_n} \left(\sum_{i=1}^q u_i K\left(\frac{y_i-x_r}{h_n}\right) \right)^2.
\end{eqnarray*}
Then, the variance of $\tilde{s}_n$ can be expanded as
$$
\Var(\tilde{s}_n)\sim \sum_{i=1}^q \somr\frac{u_i^2}{c^2}\frac{1}{h_nk_n}K^2\left(\frac{y_i-x_r}{h_n}\right)+\sum_{i\neq j}\frac{u_i u_j}{c^2}\frac{1}{h_nk_n} \somr K\left(\frac{y_i-x_r}{h_n}\right) 
K\left(\frac{y_j-x_r}{h_n}\right).
$$
Corollary~\ref{corosom} provides the limit of the first term and, from Corollary~\ref{coroK}, the second term goes to 0 when $n$ goes to infinity. As a partial conclusion,
$\Var(\tilde{s}_n)\to\normdeux{u}^2\sigma^2$ when $n\to\infty$.
 Now, we have
$$
\E\left(\Abs{\tynr}^3\right)=\frac{n^3}{h_n^{3/2}k_n^{9/2}} \Abs{\sum_{i=1}^q u_i K\left(\frac{y_i-x_r}{h_n}\right)}^3 \E\left(\Abs{\xnret-E(\xnret)}^3\right).
$$
Then, Lemma~\ref{lemespe}{\bf (iii)} entails
$$
\somr \E\left(\Abs{\tynr}^3\right)  =  \grando{\frac{1}{h_n^{3/2}{k_n^{3/2}}}}
\somr \Abs{\sum_{i=1}^q u_i K\left(\frac{y_i-x_r}{h_n}\right)}^3,
$$
and remarking that
$$
\Abs{\sum_{i=1}^q u_i K\left(\frac{y_i-x_r}{h_n}\right)}^3 \leq \normsup{K}\normun{u}
\left(\sum_{i=1}^q u_i K\left(\frac{y_i-x_r}{h_n}\right)\right)^2
$$
shows finally that
$$
\somr \E\left(\Abs{\tynr}^3\right)  =  \grando{\frac{1}{h_n^{1/2}{k_n^{1/2}}}},
$$
and the conclusion follows. 
\end{proof}

\section{Bias reduction}
\label{sectionreduc}

It is worth noticing that, in Section~\ref{sectionasymp}, the negative bias of $\fnchap$
is too large to obtain a limit distribution for $(\fnchap-f)$. 
We introduce a corrected estimator $\fntild$ sharp enough to obtain a limiting 
distribution for $(\fnchap-f)$ under conditions \Bu$\;$ and \Bd.

\noindent It is clear, in view of Lemma~\ref{lemespe}, that $\xnret$ is an estimator of $k_n\lnr$
with a negative bias asymptotically equivalent to $-{k_n}/{(nc)}$.
To reduce this bias, we introduce the random variable defined by
$$
Z_n=\frac{1}{n-k_n}\somr \xnret.
$$
Lemma~\ref{lemespe} implies that, under~\Bd,
\begin{equation}
\label{espezn}
\E(Z_n)=\frac{k_n}{nc}+\grando{\frac{1}{k_n^{2\alpha}}}.
\end{equation}
This suggests to consider the estimator
$$
\fntildx=\frac{1}{k_n}\sum_{r=1}^{k_n} K_n(x-x_r)\left(\xnret+Z_n\right),\;\; x\in[0,1].
$$
A more precise version of Theorem~\ref{thnorasymp} can be given in situation \Bu~at
the expense of additional conditions.  
To this end, we need a preliminary lemma providing the bias of the new estimator $\tilde{f}_n$.
\begin{Lem}
\label{newbias}
Under \Bu, \Bd:
$$
\normC{\E(\tilde{f}_n)-f}=\grando{\frac{1}{h_n^2 k_n^2}}+\grando{h_n^\alpha}.
$$
\end{Lem}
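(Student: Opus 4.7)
The plan is to split the bias in the usual triangular way,
$$
\normC{\E(\fntild)-f}\ \leq\ \normC{\E(\fntild)-f_n}\ +\ \normC{f_n-f},
$$
and apply Proposition~\ref{propbiais}~(ii) directly to the second term, which already contributes the announced $\grando{1/(h_n^2 k_n^2)}+\grando{h_n^\alpha}$ under \Bu. All the work is therefore to show that $\normC{\E(\fntild)-f_n}=\grando{h_n^\alpha}$, i.e. that the bias-correction by $Z_n$ wipes out the $\grando{k_n/n}$ term that appeared in Lemma~\ref{lembiais3} for $\fnchap$.

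From the definition of $\fntild$,
$$
\E(\fntildx)-f_n(x)=\Big[\E(\fnchapx)-f_n(x)\Big]+\E(Z_n)\cdot\frac{1}{k_n}\somr K_n(x-x_r).
$$
For the first bracket I would use Lemma~\ref{lemespe}~(i) together with $|k_n\lnr-f(x_r)|=\grando{1/k_n^\alpha}$ (Lipschitz regularity of $f$ on a cell of length $1/k_n$) to write
$$
\E(\xnret)-f(x_r)\ =\ -\frac{k_n}{nc}+\grando{1/k_n^\alpha}+\grando{n/k_n^{1+2\alpha}},
$$
uniformly in $r$. Factoring the leading constant $-k_n/(nc)$ out of the weighted sum then gives
$$
\E(\fnchapx)-f_n(x)\ =\ -\frac{k_n}{nc}\cdot\frac{1}{k_n}\somr K_n(x-x_r)\ +\ \grando{1/k_n^\alpha}+\grando{n/k_n^{1+2\alpha}},
$$
uniformly on $C$, where Corollary~\ref{corosom} is used to bound $\frac{1}{k_n}\somr K_n(\cdot-x_r)$ in $\normC{\cdot}$. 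Similarly, (\ref{espezn}) gives $\E(Z_n)=k_n/(nc)+\grando{1/k_n^{2\alpha}}$, so the second bracket equals $(k_n/(nc))\cdot\frac{1}{k_n}\somr K_n(x-x_r)+\grando{1/k_n^{2\alpha}}$.

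The crucial point is that these two contributions of $\pm k_n/(nc)$ are multiplied by the \emph{same} factor $\frac{1}{k_n}\somr K_n(x-x_r)$, so they cancel exactly; this is precisely why $Z_n$ was defined with the normalization $1/(n-k_n)$. What remains is
$$
\normC{\E(\fntild)-f_n}\ =\ \grando{1/k_n^\alpha}+\grando{1/k_n^{2\alpha}}+\grando{n/k_n^{1+2\alpha}},
$$
and under \Bd\ ($n=\petito{k_n^{1+\alpha}}$) all three terms are $\grando{1/k_n^\alpha}$; under \Bu\ ($h_n k_n\to\infty$) this is $\petito{h_n^\alpha}$ and is absorbed into the Bochner term from Proposition~\ref{propbiais}~(ii). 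The main point to be careful about is just the bookkeeping of this cancellation; nothing deep is needed beyond Lemma~\ref{lemespe}, (\ref{espezn}), and Corollary~\ref{corosom}.
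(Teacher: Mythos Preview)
Your argument is correct and is essentially the paper's own proof: the same triangular split $\normC{\E(\fntild)-f}\leq\normC{f_n-f}+\normC{\E(\fntild)-f_n}$, the same invocation of Proposition~\ref{propbiais}(ii), and the same ingredients (Lemma~\ref{lemespe}(i), (\ref{espezn}), Corollary~\ref{corosom}) to show $\normC{\E(\fntild)-f_n}=\grando{1/k_n^\alpha}$ under \Bd. The only cosmetic difference is that the paper carries out the cancellation inside $\max_r\big|\E(\xnret)+\E(Z_n)-f(x_r)\big|$ via the triangle inequality with intermediate terms $k_n\lnr$ and $k_n/(nc)$, whereas you first factor the common weight $\frac{1}{k_n}\somr K_n(x-x_r)$ and then cancel; both routes are equivalent bookkeeping.
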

\noindent The bias of the new estimator $\tilde{f}_n(x)$ is asymptotically lower than the bias of $\fnchapx$
since the $k_n/n$ term of~(\ref{biaisori}) is cancelled in Lemma~\ref{newbias}.
Let us also note that the variance of $\tilde{f}_n(x)$ is bounded above
by the variance of $\fnchapx$: Since
\begin{equation}
\label{eqvar1}
\Var\tilde{f}_n(x)\leq 2\Var \fnchapx +2\left(\frac{1}{k_n}\somr K_n(x-x_r)\right)^2\Var Z_n,
\end{equation}
it follows from Lemma~\ref{lemespe}, Lemma~\ref{lemvariance} and 
Corollary~\ref{corosom} that
\begin{equation}
\label{eqvar2}
\frac{\Var\tilde{f}_n(x)}{\Var\fnchap(x)}\leq 2+ \grando{\frac{\Var Z_n}{\sigma_n^2}}
=2 + \grando{\frac{k_n\Var \xnunet}{n^2\sigma_n^2}}
=2 + \grando{\frac{h_nk_n^2}{n^2}} = 2 +\petito{1}.
\end{equation}
These remarks allow  to give the asymptotic distribution of $(\fntildx - f(x))$.
\begin{Theo}
\label{thnorasymp3}
If \Bu~holds, $n=\petito{k_n^{1/2}h_n^{-1/2-\alpha}}$,
$n=\petito{k_n^{5/2}h_n^{3/2}}$ and $k_n=\petito{{n}/{\ln n}}$,
then
$t_n(x)={\sigma}_n^{-1}(\fntildx - f(x))$
converges in distribution to a centered Gaussian variable with variance $\sigma^2$,
for all $x$ in a compact subset of $]0,1[$.
\end{Theo}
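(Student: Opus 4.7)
The plan is to split
\[
t_n(x)=\sigma_n^{-1}(\fntildx-\E\fntildx)+\sigma_n^{-1}(\E\fntildx-f(x)),
\]
treating the deterministic bias and the centred fluctuations separately. The bias is already bounded by Lemma~\ref{newbias}: $\normC{\E\fntild-f}=\grando{1/(h_n^2 k_n^2)}+\grando{h_n^\alpha}$. Multiplying by $\sigma_n^{-1}=n h_n^{1/2}/k_n^{1/2}$ gives $\grando{n/(h_n^{3/2}k_n^{5/2})}+\grando{n h_n^{1/2+\alpha}/k_n^{1/2}}$, and the two extra hypotheses $n=\petito{k_n^{5/2}h_n^{3/2}}$ and $n=\petito{k_n^{1/2}h_n^{-1/2-\alpha}}$ are tailored precisely to make both of these terms $\petito{1}$ uniformly in $x\in C$.

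For the centred piece, because $G_n(x):=\frac{1}{k_n}\somr K_n(x-x_r)$ is deterministic, the identity $\fntildx=\fnchapx+Z_n G_n(x)$ yields
\[
\sigma_n^{-1}(\fntildx-\E\fntildx)=\sigma_n^{-1}(\fnchapx-\E\fnchapx)+G_n(x)\,\sigma_n^{-1}(Z_n-\E Z_n).
\]
Corollary~\ref{corosom} gives $G_n(x)\to 1$. For the $\fnchap$ contribution I invoke Theorem~\ref{thnorasymp}, which requires hypothesis~\Ad. One half, $k_n=\petito{n/\ln n}$, is assumed directly; the other half, $n=\petito{k_n^{1+\alpha}}$, follows from $n=\petito{k_n^{1/2}h_n^{-1/2-\alpha}}$ together with $h_n k_n\to\infty$ (which holds under~\Bu), since the latter gives $h_n^{-1}=\petito{k_n}$ and hence $k_n^{1/2}h_n^{-1/2-\alpha}=\petito{k_n^{1+\alpha}}$. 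Thus $\sigma_n^{-1}(\fnchapx-\E\fnchapx)$ converges in distribution to $N(0,\sigma^2)$.

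It remains to show $\sigma_n^{-1}(Z_n-\E Z_n)\to 0$ in probability; Slutsky's theorem will then assemble the three pieces. Using the independence of the $\xnret$ across $r$ and Lemma~\ref{lemespe}{\bf (ii)},
\[
\Var(Z_n)=\frac{1}{(n-k_n)^2}\somr \Var(\xnret)=\grando{k_n^3/n^4},
\]
so that $\sigma_n^{-2}\Var(Z_n)=\grando{k_n^2 h_n/n^2}=(k_n/n)^2\cdot h_n\cdot\grando{1}=\petito{1}$, and Chebyshev's inequality closes the argument.

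The main obstacle is really the rate bookkeeping: one must verify that the two new hypotheses simultaneously (i) push both bias contributions of Lemma~\ref{newbias} below the $\sigma_n$ scale and (ii) imply~\Ad~so that Theorem~\ref{thnorasymp} is available. Once that is done, the negligibility of $Z_n$'s centred fluctuations is essentially free, since $Z_n$ averages $k_n$ independent terms of variance $\grando{k_n^2/n^2}$ while $\sigma_n^{-2}$ only grows like $n^2 h_n/k_n$.
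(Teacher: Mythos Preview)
Your proof is correct and follows essentially the same decomposition as the paper: split $t_n(x)$ into the centred $\fnchap$-fluctuation, the centred $Z_n$-contribution multiplied by $G_n(x)$, and the deterministic bias, then invoke Theorem~\ref{thnorasymp}, a direct variance bound on $Z_n$, and Lemma~\ref{newbias} respectively. You are in fact slightly more careful than the paper in one respect: you explicitly check that the hypotheses $n=\petito{k_n^{1/2}h_n^{-1/2-\alpha}}$ and $h_nk_n\to\infty$ together force $n=\petito{k_n^{1+\alpha}}$, so that condition~\Ad\ holds and Theorem~\ref{thnorasymp} (as well as Lemma~\ref{newbias}) legitimately applies; the paper's proof simply invokes these results without spelling this out.
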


\begin{proof}
Consider the expansion
\begin{eqnarray*}
t_n(x)& = &{\sigma}_n^{-1}(\fntildx-\E(\fntildx))+{\sigma}_n^{-1}(\E(\fntildx) - f(x))\\
&=& {\sigma}_n^{-1}(\fnchapx-\E(\fnchapx))+ \frac{{\sigma}_n^{-1}}{k_n} \somr K_n(x-x_r) (Z_n-\E(Z_n)) + {\sigma}_n^{-1} (\E(\fntildx) - f(x)).
\end{eqnarray*}
In view of Theorem~\ref{thnorasymp}, the first term converges in distribution to  
a centered Gaussian variable with variance $\sigma^2$.
The second term is centered and its variance converges to zero from~(\ref{eqvar1})
and~(\ref{eqvar2}).
Therefore, the second term converges to 0 in probability.
The third term is controlled with Lemma~\ref{newbias}:
$$
{\sigma}_n^{-1}\normC{\E(\fntild)-f}=\grando{\frac{n}{h_n^{3/2}k_n^{5/2}}} 
+ \grando{\frac{nh_n^{1/2+\alpha}}{k_n^{1/2}}} \to 0,
$$
and the conclusion follows.
\end{proof}
\noindent The uniform mean square distance between $\fntild$ and $f$ is
derived from Lemma~\ref{lemvariance} and Lemma~\ref{newbias}:
\begin{equation}
\label{eqerreurfin}
\normC{\E(\tilde{f}_n-f)^2}=\grando{\frac{k_n}{n^2 h_n}}+\grando{\frac{1}{h_n^4 k_n^4}}+\grando{h_n^{2\alpha}}.
\end{equation}
\noindent Possible choices are
$k_n=n^{\frac{4+2\alpha}{4+5\alpha}}$ and
$h_n=n^{-\frac{4}{4+5\alpha}}$ 
leading to
$$
\normC{\E(\tilde{f}_n-f)^2}=\grando{n^{-\frac{8\alpha}{4+5\alpha}}},
$$
and thus,
\begin{equation}
\label{vitessenoyau}
\E\left(\normun{\tilde{f}_n-f}\right)= \grando{n^{-\frac{\alpha}{1+\frac{5}{4}\alpha}}},
\end{equation}
which is a significant improvement of (\ref{vitessebiaise}).
It is well-known that non-parametric estimators based on Parzen-Rosenblatt kernels
suffer from a lack of performance on the boundaries of the estimation interval.
To overcome this limitation, symmetrization techniques have been developed~\cite{Cow}.
The application of such a method to $\fntildx$ yields the following estimator:
$$
\fncheckx=\frac{1}{k_n}\sum_{r=1}^{k_n} \left(K_n(x-x_r)+K_n(x+x_r)+K_n(x+x_r-2)\right)\left(\xnret+Z_n\right),\;\; x\in[0,1].
$$
The convergence properties of $\fnchap$ and $\fntild$ on the compact subsets of
$]0,1[$ can be extended to $\fncheck$ on the whole interval $[0,1]$ without
difficulties.

\section{Comparison with other estimates}
\label{seccomp}

Let us emphasize that such comparisons
are only relevant within a same framework,
which excludes hypotheses such as the convexity or the monotonicity of $f$.
Thus, the competitive methods to our kernel approach are essentially
local polynomial estimates~\cite{KK,Hall},
piecewise polynomial estimates~\cite{KorTsy,KorTsy3}
and our projection estimate~\cite{GirJac,GirJac2}. 

-- From the theoretical point of view, piecewise polynomial estimates
benefit from the minimax optimality
whereas the estimates proposed in this paper are suboptimal.
In the class of continuous functions $f$ having a
Lipschitzian $k$-th derivative, the optimal rate of convergence is attained
by minimizing, on each cell of a partition of $[0,1]$, the measure of a 
domain with a polynomial edge of degree $k$. 
For instance, in the case of a $\alpha$-Lipschitzian frontier, the minimax
optimal rate for the $L_1$ norm is $n^{-\frac{\alpha}{1+\alpha}}$ and
the corresponding rate is $n^{-\frac{\alpha}{5/4+\alpha}}$ 
for $\tilde{f}_n$  (see (\ref{vitessenoyau})). The difference of
speed increases with $\alpha$, but even if $\alpha=1$ (which is the
worst situation for us), one obtains ``similar'' rates of convergence,
that is $n^{-1/2}$ and $n^{-4/9}$.  In this sense, kernel estimates
bring a significant improvement to projection estimates.  

-- From the practical point of view, all the previous estimates require
the selection of two hyper-parameters.  
In case of piecewise polynomial and local polynomial estimators,
the construction of the estimate requires to select the degree of
the polynomial function (which corresponds to $k$ in the piecewise polynomial
framework) and a smoothing parameter (the size of the cells in the piecewise
polynomial context and the size of the moving window in the local polynomial
context).  
Of course, the selection of the degree of the polynomial function
is usually easier than the choice of a parameter on a continuous scale
such as $h_n$.  
Nevertheless, our opinion is that kernel estimates are the
most pleasant to use in practice for the following reasons.   
The computation of local and piecewise polynomial estimates 
requires to solve an optimization problem.  
For instance, the computation of piecewise polynomial estimates
is not straightforward, at least for $k>0$.  
When $k=0$, piecewise polynomial estimates reduce to Geffroy's estimate,
whose unsatisfying behavior on finite sample situations 
has been illustrated in Section~\ref{sectionillus}.
At the opposite, kernel and projection estimators enjoy explicit forms and
are thus easily implementable.
Besides, these methods yield smooth estimates whereas piecewise polynomial
estimates are discontinuous whatever the regularity degree of $f$ is. 
Finally, only kernel and projection estimates benefit from
an explicit asymptotic distribution.   This property allows to
build pointwise confident intervals without costly Monte-Carlo
methods.   
In the local polynomial estimates situation (see~\cite{Hall}), 
both the limiting distribution  
and the normalization sequences  
are not explicit making difficult the reduction of the asymptotic bias.

\section{Appendix : proof of lemmas}

\paragraph{\it Proof of Lemma~\ref{lemespe}.}

We give here the complete proof of {\bf (i)} and a sketch of the
proofs of {\bf (ii)} and {\bf (iii)} since the methods in use are
similar.
\begin{description}
\item [(i)] The mathematical expectation can be expanded in three terms:
\begin{eqnarray*}
\E(\xnret-k_n\lnr)=-k_n\lnr e^{-nc\lnr} & + & \int_0^{\mnr} (x-k_n\lnr)F_{n,r}'(x) dx \\
&+& \int_{\mnr}^{\Mnr} (x-k_n\lnr)F_{n,r}(dx).
\end{eqnarray*}
The first term of the sum is asymptotically negligible:
\begin{equation}
\label{an1}
\max_r k_n\lnr e^{-nc\lnr} = \petito{n^{-s}},
\end{equation}
for all $s>0$, when $n\to \infty$.  Using~(\ref{eqfdr}), 
the second term can be rewritten as
\begin{eqnarray*}
\int_0^{\mnr} (x-k_n\lnr)F_{n,r}'(x) dx &=& \frac{nc}{k_n}\int_0^{\mnr} (x-k_n\lnr)
\exp{\left[\frac{nc}{k_n}(x-k_n\lnr)\right]}dx \\
&=& -\frac{k_n}{nc}\int_{\frac{nc}{k_n}(k_n\lnr-\mnr)}^{nc\lnr} u \exp{(-u)} du.
\end{eqnarray*}
Let us note $\phi(u)=(u+1)\exp{(-u)}$ a primitive of $-u\exp{(-u)}$.
We have $\phi(u)=1+\grando{u^2}$ when $u\to 0$ and $\phi(u)=\petito{u^{-s}}$, $\forall s>0$
when $u\to\infty$.
Consequently, remarking that the upper bound goes to infinity, and that the lower
 bound goes to 0 under the assumption $n=\petito{k_n^{1+\alpha}}$ yields
\begin{equation}
\label{an2}
\max_r\Abs{\int_0^{\mnr} (x-k_n\lnr)F_{n,r}'(x) dx +\frac{k_n}{nc}}= \grando{\frac{n}{k_n^{1+2\alpha}}}.
\end{equation}
The third term is bounded above by
$$
\int_{\mnr}^{\Mnr} (x-k_n\lnr)F_{n,r}(dx) \leq (\Mnr-\mnr)
\left[1-\exp{\left(\frac{nc}{k_n}(\mnr-k_n\lnr)\right)}\right]
$$
and thus
\begin{equation}
\label{an3}
\max_r\Abs{\int_{\mnr}^{\Mnr} (x-k_n\lnr)F_{n,r}(dx)}= \grando{\frac{n}{k_n^{1+2\alpha}}}.
\end{equation}
Collecting~(\ref{an1}),~(\ref{an2}) and~(\ref{an3}) proves the result.
\item [(ii)] It is convenient to write the variance as
$$
\Var(\xnret)=\Var(\xnret-k_n\lnr)=\E\left[(\xnret-k_n\lnr)^2\right] - \E^2(\xnret-k_n\lnr).
$$
With a method very similar to the one used to prove {\bf (i)}, we obtain uniformly in $r$,
$$
\E\left[(\xnret-k_n\lnr)^2\right] = \frac{2k_n^2}{n^2c^2} + \grando{\frac{1}{k_n^{2\alpha}}}.
$$
Besides, {\bf (i)} entails
$$
\E^2(\xnret-k_n\lnr)= \frac{k_n^2}{n^2c^2} + \grando{\frac{1}{k_n^{2\alpha}}},
$$
uniformly in $r$, and the conclusion follows.
\item [(iii)] The proof is similar. It requires the calculation of
$\E(\Abs{\xnret-k_n\lnr}^3)$ and the use of {\bf (i)} and {\bf (ii)}.
\end{description}

\paragraph{\it Proof of Lemma~\ref{lemK}.}

Let $\varepsilon>0$ and split~(\ref{eqlemK}) into
\begin{eqnarray*}
\int K(u)K\left(u+\frac{a}{h_n}\right) P_n(du) &=&  
\int_{\Abs{u}>\frac{\Abs{a}}{2 h_n}} K(u)K\left(u+\frac{a}{h_n}\right) P_n(du) \\ &+&
\int_{\Abs{u}\leq\frac{\Abs{a}}{2 h_n}} K(u)K\left(u+\frac{a}{h_n}\right) P_n(du),
\end{eqnarray*}
and consider the two terms separately.
\begin{itemize}
\item The first term is bounded above by
\begin{eqnarray}
\label{eqtmp}
\int_{\Abs{u}>\frac{\Abs{a}}{2 h_n}} K(u)K\left(u+\frac{a}{h_n}\right) P_n(du)
&\leq&  \normsup{K} \int_{\Abs{u}>\frac{\Abs{a}}{2 h_n}}K(u)P_n(du) \\
&=&  \normsup{K} \int_{\Abs{u}>\frac{\Abs{a}}{2 h_n}}\Abs{u}K(u)\frac{1}{\Abs{u}}P_n(du). 
\nonumber
\end{eqnarray}
Since $uK(u)\to 0$ when $\Abs{u}\to\infty$, for $n$ large enough $\Abs{uK(u)}<\varepsilon$ entailing
$$
\int_{\Abs{u}>\frac{\Abs{a}}{2 h_n}} K(u)K\left(u+\frac{a}{h_n}\right) P_n(du) \leq \normsup{K}\varepsilon \int_{\Abs{u}>\frac{\Abs{a}}{2 h_n}} \frac{1}{\Abs{u}}P_n(du) \leq  \frac{2\varepsilon h_n \normsup{K}}{\Abs{a}}.
$$
We have proved that $\forall \varepsilon>0$, for $n$ large enough
$$
\frac{1}{h_n} \int_{\Abs{u}>\frac{\Abs{a}}{2 h_n}} K(u)K\left(u+\frac{a}{h_n}\right) P_n(du) \leq  \frac{2\varepsilon\normsup{K}}{\Abs{a}},
$$
or equivalently,
$$
\int_{\Abs{u}>\frac{\Abs{a}}{2 h_n}} K(u)K\left(u+\frac{a}{h_n}\right) P_n(du) = \petito{h_n}.
$$
\item The second term is bounded above by
\begin{eqnarray*}
\int_{\Abs{u}\leq\frac{\Abs{a}}{2 h_n}} K(u)K\left(u+\frac{a}{h_n}\right) P_n(du)
&\leq&  \normsup{K} \int_{\Abs{u}\leq\frac{\Abs{a}}{2 h_n}} K\left(u+\frac{a}{h_n}\right) P_n(du) \\
&=&\normsup{K} \int_{\Abs{v}>\frac{\Abs{a}}{2 h_n}}K(v)P_n(dv),
\end{eqnarray*}
with $v=u+a/h_n$, and the end of the proof is the same as for~(\ref{eqtmp}).
\end{itemize}

\paragraph{\it Proof of Lemma~\ref{lembiais1}.}

Taking into account that $f$ vanishes out of $[0,1]$, we have
$$
f_n(x)-g_n(x)= \frac{1}{k_n} \somr K_n(x-x_r)f(x_r)-\int_0^1 K_n(x-y)f(y)dy.
$$
Let us define $\phi_{n,x}(y)=K_n(x-y)f(y)$ for $(x,y)\in\R^2$. With this notation,
\begin{equation}
\label{eqinte}
f_n(x)-g_n(x)=\somr \int_{\inr} \left[\phi_{n,x}(x_r)-\phi_{n,x}(y)\right] dy
=\frac{1}{k_n} \somr \int_{-1/2}^{1/2}\left[ \phi_{n,x}(x_r) - \phi_{n,x}\left(x_r+\frac{u}{k_n}\right)\right] du.
\end{equation}
and we have the following expansion:
\begin{eqnarray}
\label{terme1} \phi_{n,x}(x_r) - \phi_{n,x}\left(x_r+\frac{u}{k_n}\right) &=&
K_n\left(x-x_r-\frac{u}{k_n}\right)\left(f(x_r)-f\left(x_r+\frac{u}{k_n}\right)
\right)\\
\label{terme2}
& +&  f(x_r) \left(K_n(x-x_r)-K_n\left(x-x_r-\frac{u}{k_n}\right)\right),
\end{eqnarray}
Now, since $f$ is $\alpha$-Lispchitz, (\ref{terme1}) is uniformly bounded above
by $\normsup{K}L_f/(h_nk_n^\alpha)$.  
The rest of the proof depends on the assumptions made on $K$: 
\begin{description}
\item [(i)] Under \Au, $K$ is $\beta$-Lipschitzian and thus (\ref{terme2})
is uniformly bounded above by 
$\normsup{f} L_K /(h_n^{1+\beta} k_n^\beta)$,
and the conclusion follows.
\item [(ii)] Under \Bu, since $K$ has a compact support, the number
of nonzero terms in (\ref{eqinte}) is $\grando{k_nh_n}$.  
Thus, the contribution of (\ref{terme1}) is $\grando{1/k_n^\alpha}$.  
Two situations have to be considered for the term~(\ref{terme2}).
If $r$ is such that $K$ has only a bounded first derivative  
at $x-x_r$, then (\ref{terme2}) is uniformly bounded above by
$\normsup{f} \normsup{K'} /(h_n^{2} k_n)$. Remarking there are only
a finite number of such terms in (\ref{eqinte}) shows that the
contribution of (\ref{terme2}) is $\grando{1/(h_n^2k_n^2)}$.  
If $r$ is such that $K$ is $C^2$  at $x-x_r$, then 
a second order Taylor expansion yields
$$
K_n(x-x_r)-K_n\left(x-x_r-\frac{u}{k_n}\right)
 = \frac{u}{k_n} K'_n(x-x_r) - \frac{u^2}{2k_n^2} K''_n\left(x-x_r+\theta_n(u)\frac{u}{k_n}\right),
$$
with $\theta_n(u)\in ]0,1[$. Replacing in~(\ref{eqinte}), the first order term 
vanishes, and thus the contribution of (\ref{terme2}) is 
bounded above by $\normsup{f}\normsup{K''_n}h_n/(24k_n^2)$. 
Since $\normsup{K''_n}=\grando{1/h_n^3}$ the result follows.
\end{description}

\paragraph{\it Proof of Lemma~\ref{lembiais2}.}

For any compact subset $C\subset]0,1[$, there exist $0<a<b<1$ such that $C\subset[a,b]$.
Let  $x\in[a,b]$ and consider
\begin{equation}
\label{eqdiff}
g_n(x)-f(x)=  \int_\R K_n(u) (f(x-u)-f(x)) du.
\end{equation}
Let $(\delta_n)$ be a positive sequence tending to 0. Then,~(\ref{eqdiff})
is bounded above by
\begin{eqnarray*}
\Abs{g_n(x)-f(x)}&\leq &\sup_{\Abs{u}\leq \delta_n} \Abs{f(x-u)-f(x)}
+ \int_{\Abs{u}\geq\delta_n} K_n(u) \Abs{f(x-u)-f(x)} du \nonumber\\
& \leq & \sup_{\Abs{u}\leq \delta_n} \Abs{f(x-u)-f(x)} + 2 \normsup{f} \int_{\Abs{u}\geq\delta_n} K_n(u) du.
\end{eqnarray*}
For $n$ large enough, $\delta_n<\min(a,1-b)$ and then $\Abs{u}\leq \delta_n$ entails $(x-u)\in [0,1]$.
Now, since $f$ is $\alpha$-Lipschitzian on $[0,1]$, it yields
\begin{equation}
\label{bochner}
\Abs{g_n(x)-f(x)} \leq L_f \delta_n^\alpha + 2 \normsup{f} \int_{\Abs{u}\geq\delta_n} K_n(u) du.
\end{equation}
Two cases arise:
\begin{description}
\item [(i)] If $u\to u^2 K(u)$ is integrable then
$$
\Abs{g_n(x)-f(x)} \leq L_f \delta_n^\alpha + 2 \normsup{f} \int_{\R} u^2K(u) du \left(\frac{h_n}{\delta_n}\right)^2.
$$
Considering $\delta_n=h_n^{\frac{2}{\alpha+2}}$ in this inequality (which can also be found
page 61 in~\cite{Bosq2} under different hypotheses) gives the result.
\item [(ii)] If $K$ has a compact support, let $A>0$ such that 
supp $(K)\subset [-A,A]$. Then, considering $\delta_n=A h_n$, the
second term in (\ref{bochner}) vanishes and the result is proved.  
\end{description}

\paragraph{\it Proof of Lemma~\ref{lembiais3}.}

Consider $x\in C$. As a consequence of the definitions
\begin{eqnarray*}
\Abs{\E(\fnchapx)-f_n(x)}& \leq & \frac{1}{k_n}\somr K_n(x-x_r) \max_r\Abs{\E(\xnret)-f(x_r)}\\
& \leq & \left(1+\normC{\frac{1}{k_n}\somr K_n(.-x_r)-1} \right) \max_r\Abs{\E(\xnret)-f(x_r)}\\
& \leq & \left(1+\petito{1}\right) \max_r\Abs{\E(\xnret)-f(x_r)},
\end{eqnarray*}
with Corollary~\ref{corosom}. Besides, we have
$$
\Abs{\E(\xnret)-f(x_r)} \leq \frac{k_n}{nc} +\Abs{\E(\xnret)-k_n\lnr + \frac{k_n}{nc}} + \Abs{k_n\lnr-f(x_r)},
$$
and Lemma~\ref{lemespe} yields
$$
\normC{\E(\fnchap)-f_n}=\grando{\frac{k_n}{n}} + \grando{\frac{n}{k_n^{1+2\alpha}}} +\grando{\frac{1}{k_n^\alpha}} = \grando{\frac{k_n}{n}},
$$
under \Ad.  

\paragraph{\it Proof of Lemma~\ref{lemvariance}.}

Let $x\in C$. In view of the independence of the $\xnret$, $r=1,\dots,k_n$,
$$
\Var(\fnchapx)= \frac{1}{k_n^2}\somr K_n^2(x-x_r) \Var(\xnret).
$$
Introducing
$$
\Delta V_n= \frac{n^2}{k_n^2} \max_r \Abs{\Var(\xnret)- \frac{k_n^2}{n^2 c^2}}
\mbox{ and }
\Delta K_n= \normC{\frac{1}{h_nk_n}\somr K^2\left(\frac{.-x_r}{h_n}\right)-\normdeux{K}^2},
$$
we have
$$
\normC{\frac{\Var(\fnchap)}{\sigma_n^2} - \sigma^2} \leq \Delta K_n(\Delta V_n + 1/c^2)+ \Delta V_n \normdeux{K}^2.
$$
Lemma~\ref{lemespe} shows that $\Delta V_n\to 0$,  Corollary~\ref{corosom} 
applied to the kernel $K^2/\normdeux{K}^2$ shows that $\Delta K_n\to 0$
as $n\to\infty$, and the conclusion follows.

\paragraph{\it Proof of Lemma~\ref{newbias}.}

The bias expands as
$$
\normC{\E(\fntild)-f}\leq  \normC{f_n-f} + \normC{\E(\fntild)-f_n},
$$
which first term is controlled by Proposition~\ref{propbiais}. Consider the second term:
\begin{eqnarray}
\Abs{\E(\fntildx)-f_n(x)}&\leq&\left(\frac{1}{k_n}\somr K_n(x-x_r)\right)\max_r\Abs{\E(\xnret)+\E(Z_n)-f(x_r)}\\
& \leq & \left(1+\normC{\frac{1}{k_n}\somr K_n(.-x_r)-1}\right)\max_r\Abs{\E(\xnret)+\E(Z_n)-f(x_r)}. \nonumber
\end{eqnarray}
Corollary~\ref{corosom} shows that it is sufficient to consider 
$$
\Abs{\E(\xnret)+\E(Z_n)-f(x_r)} \leq \Abs{\E(\xnret)-k_n\lnr+\frac{k_n}{nc}} + \Abs{\E(Z_n)-\frac{k_n}{nc}}
+ \Abs{k_n\lnr-f(x_r)}.
$$
Lemma~\ref{lemespe} and~(\ref{espezn}) yield
$$
\normC{\E(\fntild)-f_n} = \grando{\frac{n}{k_n^{1+2\alpha}}} + 
\grando{\frac{1}{k_n^{2\alpha}}} + \grando{\frac{1}{k_n^{\alpha}}} 
= \grando{\frac{1}{k_n^{\alpha}}} 
$$
under \Bd, and the conclusion follows.    


\end{document}